\newtheorem{lem}{Lemma}[section]
\newtheorem{rem}{Remark}[section]
\newtheorem{prop}{Proposition}[section]
\providecommand{\algorithmname}{Algorithm}
\crefname{prop}{proposition}{propositions}
\crefname{lem}{lemma}{lemmas}
\crefname{algorithm}{algorithm}{algorithms}
\Crefname{algorithm}{Algorithm}{Algorithms}
\let\oldtheequation\theequation
\renewcommand\theequation{(\oldtheequation)}
\newcounter{hypA}
\newenvironment{hypA}{\refstepcounter{hypA}\begin{itemize}
  \item[({\bf A\arabic{hypA}})]}{\end{itemize}}
\newcounter{hypB}
\newcounter{hypD}
\date{}
\numberwithin{equation}{section}
\begin{document}

\begin{center}

{\Large \textbf{Unbiased Estimation using Underdamped \\ Langevin Dynamics}}

\vspace{0.5cm}

 HAMZA RUZAYQAT, NEIL K. CHADA, \& AJAY JASRA

{\footnotesize Applied Mathematics and Computational Science Program, \\ Computer, Electrical and Mathematical Sciences and Engineering Division, \\ King Abdullah University of Science and Technology, Thuwal, 23955-6900, KSA.} \\
{\footnotesize E-Mail:\,} \texttt{\emph{\footnotesize hamza.ruzayqat@kaust.edu.sa, neilchada123@gmail.com, ajay.jasra@kaust.edu.sa}}

\begin{abstract}
In this work we consider the unbiased estimation of expectations w.r.t.~probability measures that have non-negative Lebesgue density, and which are known
point-wise up-to a normalizing constant. We focus upon developing an unbiased method via  the underdamped Langevin dynamics, which has proven to be 
popular of late due to applications in statistics and machine learning. Specifically in continuous-time, the dynamics can be constructed \textcolor{black}{so that as the time goes to infinity they} admit the probability of interest as a stationary measure. \textcolor{black}{In many cases, time-discretized versions of the underdamped Langevin dynamics are used in practice which are run only with a fixed number of iterations.}
We develop a novel scheme based upon doubly randomized estimation as in \cite{ub_grad,disc_model}, which requires access only to time-discretized versions of the dynamics. \textcolor{black}{The proposed scheme aims to remove the dicretization bias and the bias resulting from running the dynamics for a finite number of iterations}. We prove, under standard assumptions, that our estimator is of finite variance and either has finite expected cost, or
has finite cost with a high probability. To illustrate our theoretical findings we provide numerical experiments which verify our theory, which include challenging examples from Bayesian statistics and statistical physics.
\\ \bigskip
\noindent \textbf{Key words}: Underdamped Langevin dynamics, unbiased estimation, maximal couplings, Markov chain simulation \\
\noindent \textbf{AMS subject classifications}: 60J22, 65C05, 65C40, 82C31, 62G08, 35Q56 \\
\noindent\textbf{Code available at:} \url{https://github.com/ruzayqat/unbiased_uld}\\
\noindent\textbf{Corresponding author}: Hamza Ruzayqat. E-mail:
\href{mailto:hamza.ruzayqat@kaust.edu.sa}{hamza.ruzayqat@kaust.edu.sa}

\end{abstract}

\end{center}

\section{Introduction}
\label{sec:intro}

We consider a class of probability measures on the measurable space $(\mathbb{R}^d,\mathcal{B}(\mathbb{R}^d))$ with non-negative Lebesgue densities known point-wise
up-to a normalizing constant. The objective of this article is to consider simulation-based methods, which can return stochastic estimates of finite expectations of functions w.r.t.~the afore-mentioned probability measure, 
that are unbiased, that is, on average are equal to the expectation of interest. This latter task is of interest in a wide variety of areas such as applied mathematics, physics and statistics; see e.g.~\cite{robert} for a book length introduction. 

The primary methodology that is used in the literature to approximate expectations is that of Markov chain Monte Carlo (MCMC) methods. These are schemes which generate ergodic Markov chains whose stationary distribution is exactly the one of interest and there are numerous variants of MCMC, such as random walk Metropolis-Hastings, Hamiltonian Monte Carlo and non-reversible MCMC; see e.g.~\cite{robert}. In addition to this, are methods based upon uncorrected time discretizations of continuous-time processes, which also have the appropriate distribution of interest as a stationary distribution, such as the underdamped Langevin algorithm e.g.~\cite{frenkel,greg}. In particular with the underdamped Langevin algorithm, it has been empirically observed to converge with a better rate, to an invariant distribution, than that of the overdamped Langevin dynamics, which is much simpler in comparison \cite{cheng,dalalyan,durmus}. As a result, these latter methods are of interest often due to their relative ease of simulation relative to the former and have gained significant popularity in the statistics and machine learning literature \cite{gao,fractional}. In both of the classes of methods that we have mentioned, in general, without starting these Markov chains from draws from the target distribution of interest, one seldom returns unbiased estimates which is the main interest in this article. Unbiasedness can be desirable in certain contexts, for instance when computing sensitivities for stochastic gradient algorithms e.g.~\cite{benven}.

We focus upon contributing to the class of unbiased MCMC algorithms and trying to enhance the applicability of such schemes. Unbiased estimation can be achieved in at least two ways, by exact simulation \cite{beskos, beskos2} (e.g.~starting the chain from the correct probability of interest) or unbiased approximation. The former has been investigated many years ago in the guise of coupling from the past MCMC algorithms (e.g.~\cite{propp}), but due to the complications of doing so, the application of such methods is rather rare. The latter is often based upon the pioneering works on unbiased estimation found in \cite{glynn2} (see also \cite{mcl,rhee}). The idea of these latter methods, in the Markov chain context, is to work with a pair of Markov chains on a product space and simulate them until they are equal (the \emph{meeting time}); there is then a novel identity which ensures that the estimate is unbiased. Creating a methodology for allowing the Markov chains to meet was developed in the paper \cite{jacob1} and subsequent to this, several modifications and improvements were given in \cite{midd,boom}. However there has been recent interest of unbiased methodologies, related to that latter way, where we provide some of these works  \cite{chada2,chada,ub_grad,levy,zakai}, \textcolor{black}{which made extensions in the context of particle filtering and particle MCMC methods}.

One of the main issues of the methodology developed in \cite{jacob1} and the sequels, is that one needs to consider sometimes quite complex coupling of pairs of Markov chains.
This can be quite non-trivial to achieve and, in some scenarios such as when the target density is multimodal, rather inefficient, leading to large variances in estimation. This was partially addressed in \cite{sfs} which considered an unbiased version of the Schr\"{o}dinger-F\"{o}llmer sampler (SFS). The latter is a diffusion process on a bounded time domain $[0,1]$, that transports a degenerate distribution at 0, to the target of interest, assuming that the latter is absolutely continuous w.r.t.~a $d-$dimensional standard Gaussian. Even under Euler time-discretization, the process cannot be simulated as the drift term is complicated resulting it in being intractable, but several mechanisms are available; see \cite{sfs}. The authors in that paper show that by using doubly randomized unbiased schemes (e.g.~\cite{ub_bip,ub_pf}) an unbiased version of the SFS can be developed which provides unbiased estimates without having to resort to complex coupling mechanisms. One of the drawbacks of the methodology, however, is the SFS method in the beginning; when approximating the drift, the method can struggle to well-represent complex probability measures. \\
\textcolor{black}{
As a result the focus of this article is on the development of unbiased schemes which alleviate the issues discussed above. Specifically we want to consider unbiased schemes, which can handle two forms
of bias, that from the MCMC and from the discretization bias, arising from models such as stochastic differential equations. This extends the work of  \cite{jacob1} which does not consider the latter as a form
of bias. To do so we exploit ideas randomized multilevel Monte Carlo (MLMC) methods, which have been developed to reduce the cost to attain a particular order of MSE $\mathcal{O}(\epsilon^2), \epsilon>0$.
A recent study showing the connection with the unbiased MC and MLMC is the work of Vihola \cite{vihola}. Also we aim to provide a method which does not require a complex coupling of Markov chains, which is relatively
simple to implement, which has been well understood in theory and practice.
 }
\subsection{Contributions}
The contribution of this article is to develop a new version of the underdamped Langevin algorithm which, even when only working with time-discretizations of the process, can deliver unbiased estimates of expectations w.r.t.~the class of probability measures under considerations. Our motivation is that, like similarly to the work in \cite{sfs}, it does not require very complex coupling techniques and also relies on a double randomization scheme which was developed in \cite{disc_model}. The method also retains the advantages of the unbiased MCMC methods in \cite{jacob1}, that being that the estimates can be computed in an embarrassingly parallel manner, which provides computational speed-ups versus traditional MCMC algorithms.
This is important as our methodology handles two forms of bias, one which arises from the MCMC, and the second arising from the discretization bias associated to a model problem. This is a distinguishment over methods developed by Jacob et al., \cite{jacob1}, but also poses improvements, as we will demonstrate later in the paper, over recently developed methods like the SFS method.
In particular from our work our highlighting contribution is that we establish that our estimate is unbiased and of finite variance and, either has finite expected cost to compute, or has finite cost with high probability. These results rely heavily upon the works in \cite{disc_model} and \cite{disc_lange}, where we assume similar assumptions for the former, and the latter of which provides $\mathscr{V}$-uniform ergodicity of the discretized Markov kernel that we use, where $\mathscr{V}$ is a chosen Lypaunov function. To highlight our theoretical findings we provide numerical experiments on a range of interesting and challenging examples, which arise in statistics and physics. These include a Bayesian logistic regression problem, a double well potential model and finally a Ginzburg-Landau model. We demonstrate both the estimator achieving finite variance and unbiasedness. We compare it to the SFS \cite{sfs} to demonstrate the gains achieved from our proposed estimator.   {To demonstrate the significance of our algorithm further, we also compare it to the unbiased Metropolis adjusted Langevin algorithm (U-MALA) on one of our numerical examples.}

\subsection{Outline}
This article is structured as follows. In \autoref{sec:approach} we detail our proposed methodology based on the underdamped Langevin dynamics. This will lead onto \autoref{sec:theory} where we establish that our estimator is unbiased and of finite variance. Numerical experiments are then conducted in \autoref{sec:numerics}, where we illustrate our methods on several challenging examples. We conclude our remarks and future areas of research in \autoref{sec:conc}. Finally the appendix houses a technical result used in  \autoref{sec:theory}, and the majority of the algorithms introduced.

\section{Approach}
\label{sec:approach}

In this section we first provide a common notation which will be used throughout the manuscript. We then will introduce our approach for unbiased estimation. This will include a review and discussion on the underdamped Langevin dynamics, and in our context. Finally we will discuss how this is related to our new unbiased estimator in the context of maximal couplings and provide our unbiased estimator through various algorithms.

\subsection{Notations}

Let $(\mathsf{X},\mathcal{X})$ be a measurable space.
For $\varphi:\mathsf{X}\rightarrow\mathbb{R}$ we write $\mathcal{B}_b(\mathsf{X})$, 
to denote the collection of bounded measurable functions.
 For $\varphi\in\mathcal{B}_b(\mathsf{X})$, we write the supremum norm as $\|\varphi\|=\sup_{x\in\mathsf{X}}|\varphi(x)|$. 
We denote the Borel sets on
$\mathbb{R}^d$ as $B(\mathbb{R}^d)$. The $d-$dimensional Lebesgue measure is written as $dx$.
For a metric $\mathsf{d:\mathsf{X}\times\mathsf{X}}\rightarrow\mathbb{R}^+$ on $\mathsf{X}$ and a function $\varphi:\mathsf{X}\rightarrow\mathbb{R}$,
$\textrm{Lip}_{\mathsf{d}}(\mathsf{X})$ are the Lipschitz functions (with finite Lipschitz constants), that is for every $(x,w)\in\mathsf{X}\times\mathsf{X}$, $|\varphi(x)-\varphi(w)|\leq \|\varphi\|_{\textrm{Lip}}\mathsf{d}(x,w)$.
$\mathscr{P}(\mathsf{X})$  denotes the collection of probability measures on $(\mathsf{X},\mathcal{X})$.
For a finite measure $\mu$ on $(\mathsf{X},\mathcal{X})$
and a $\varphi\in\mathcal{B}_b(\mathsf{X})$, the notation $\mu(\varphi)=\int_{\mathsf{X}}\varphi(x)\mu(dx)$ is used.
For $(\mathsf{X}\times\mathsf{W},\mathcal{X}\vee\mathcal{W})$ a measurable space and $\mu$ a non-negative finite measure on this space,
we use the tensor-product of functions notation for $(\varphi,\psi)\in\mathcal{B}_b(\mathsf{X})\times\mathcal{B}_b(\mathsf{W})$,
$\mu(\varphi\otimes\psi)=\int_{\mathsf{X}\times\mathsf{W}}\varphi(x)\psi(w)\mu(d(x,w))$.
Given a Markov kernel $K:\mathsf{X}\rightarrow\mathscr{P}(\mathsf{X})$ and a finite measure $\mu$, we use the notations
$
\mu K(dx') = \int_{\mathsf{X}}\mu(dx) K(x,dx')
$
and 
$
K(\varphi)(x) = \int_{\mathsf{X}} \varphi(x') K(x,dx'),
$
for $\varphi\in\mathcal{B}_b(\mathsf{X})$. 
The iterated kernel is $K^n(x_0,dx_n) = \int_{\mathsf{X}^{n-1}}\prod_{i=1}^n K(x_{i-1},dx_i)$.
For $A\in\mathcal{X}$, the indicator function is written as $\mathbb{I}_A(x)$. $\mathbb{Z}^+$ is the set of non-negative integers. 
$I_d$ denotes the $d\times d$ identity matrix. The transpose of a vector or matrix $x$ is denoted as $x^T$. We denote min($a,b$) as $a \wedge b$. The operator $\nabla$ denotes the gradient and $\Delta$ the Laplacian, while $\Delta_l$ denotes a time step-size of $2^{-l}$, $l\in \mathbb{Z}^+$.

\subsection{Framework}

We consider the case of underdamped Langevin dynamics:
\begin{eqnarray}
dX_t & = & V_t dt, \label{eq:dyn1}\\
dV_t & = & \left(b(X_t) - \kappa V_t\right)dt + \sigma dB_t, \label{eq:dyn2}
\end{eqnarray}
where $\{B_t\}_{t\geq 0}$ is a standard $d-$dimensional Brownian motion, $(\kappa,\sigma)\in(0,\infty)^2$ are given friction and diffusion coefficients
and $b:\mathbb{R}^d\rightarrow\mathbb{R}^d$ is a typically of gradient form, $b=-\nabla U$. In the latter case there is, under fairly weak assumptions an invariant measure of the process $\{X_t,V_t\}_{t\geq 0}$, with Lebesgue density, $\pi(x,v)$: 
$$
\pi(x,v) \propto \exp\left\{\frac{-\kappa\left(2U(x) + \|v\|^2\right)}{\sigma^2}\right\}.
$$

In practice, one often resorts to time-discretization of the dynamics \eqref{eq:dyn1}-\eqref{eq:dyn2}, for which we focus upon the Euler discretization of step-size $\Delta_l=2^{-l}$, $l\in\mathbb{N}_0$:
\begin{eqnarray}
X_{(k+1)\Delta_l} & = & X_{k\Delta_l} + V_{k\Delta_l}\Delta_l, \label{eq:disc_dyn1}\\
V_{(k+1)\Delta_l} & = & V_{k\Delta_l} + \left(b(X_{k\Delta_l}) - \kappa V_{k\Delta_l}\right)\Delta_l + \sigma\left(B_{(k+1)\Delta_l}-B_{k\Delta_l}\right), \label{eq:disc_dyn2}
\end{eqnarray}
with $k\in\mathbb{N}_0$. Under assumptions (\textcolor{black}{see \cite{disc_lange} Section 2.3}), for $l$ large enough, the discrete-time Markov chain expressed as $\{X_{k\Delta_l},V_{k\Delta_l}\}_{k\in\mathbb{N}_0}$ 
admits a unique invariant measure $\eta_l$ and moreover will converge (in an appropriate sense) to it geometrically quickly. In addition, we should have that $\eta_l$ will converge to $\pi$, by the convergence of Euler approximations. As it will facilitate the approach we are to introduce, we shall modify \eqref{eq:disc_dyn1} to
\begin{equation}
X_{(k+1)\Delta_l} =  X_{k\Delta_l} + V_{k\Delta_l}\Delta_l + \sigma_l\Gamma_{k,l}, \label{eq:disc_dyn3}
\end{equation}
where, $\{\sigma_l\}_{l\in\mathbb{N}_0}$ is any sequence of non-negative and decreasing constants, that converge to zero and for each $l\in\mathbb{N}_0$, $\{\Gamma_{k,l}\}_{k\in\mathbb{N}_0}$ is a sequence of i.i.d.~$d-$dimensional Gaussian random variables of zero mean and covariance matrix the identity multiplied by $\Delta_l$ and that this family of sequences are independent of all other random variables (and of each other). 
As this modification falls under the framework of \cite{disc_lange}, under assumptions, 
 that for $l$ large enough, the discrete-time Markov chain $\{X_{k\Delta_l},V_{k\Delta_l}\}_{k\in\mathbb{N}_0}$, following the dynamics \eqref{eq:disc_dyn2}-\eqref{eq:disc_dyn3},
admits a unique invariant measure $\pi_l$ (which is likely to be different to $\eta_l$) and moreover will converge (in an appropriate sense) to $\pi_l$ geometrically quickly. In addition, we should have that $\pi_l$ will converge to $\pi$, by the convergence of Euler approximations and that the noise in \eqref{eq:disc_dyn3} will disappear.

The approach here is as follows: by using only the dynamics \eqref{eq:disc_dyn2}-\eqref{eq:disc_dyn3}, we will show that one can produce an estimator of $\pi(\varphi\otimes 1)=\int_{\mathbb{R}^{2d}}\varphi(x,v)\pi(x,v)dxdv$ that is unbiased, where $\varphi:\mathbb{R}^{2d}\rightarrow\mathbb{R}$ is $\pi-$integrable. In practice, of course, we will only be interested in the marginal on the $X-$co-ordinate of $\pi$. In other words, we will be interested in unbiasedly estimating the integral $\int_{\mathbb{R}^d} \Psi(x) \tilde{\pi}(x)dx$, with $\Psi(x):\mathbb{R}^d\to \mathbb{R}$ is $\tilde{\pi}-$integrable and $\tilde{\pi}(x)\propto \exp{\{-(2\kappa/\sigma^2) U(x)\}}$ for some differentiable potential function $U$.

\subsection{A Conditional-Type Max-Coupling}\label{sec:cond_max_coup}

In computational statistics a common procedure is to couple various distributions $\mu_1$ and $\mu_2$. Particular examples of this can include independent coupling, or optimal couplings based on the theory of
optimal transport. For this work we make use of maximum couplings \cite{thor}.

To construct our approach, we will need the following simple idea. Suppose that we have two positive Lebesgue densities $\mu_1(x_1,y)$ and $\mu_2(x_2,y)$ on spaces $\mathsf{X}_1\times\mathbb{R}^{2d}$ and $\mathsf{X}_2\times\mathbb{R}^{2d}$, where $\mathsf{X}_1$ and $\mathsf{X}_2$ are two possibly different dimensional sub-spaces of powers of the real line. Suppose also, that we know pointwise the conditional densities, for $j\in\{1,2\}$, $x_j\in\mathsf{X}_j$ fixed:
$$
\mu_j(y|x_j) = \frac{\mu_j(x_j,y)}{\int_{\mathbb{R}^d} \mu_j(x_j,y)dy}.
$$
Our objective is to sample from a coupling of $\mu_1$ and $\mu_2$, so that there is a non-zero probability that the $Y-$co-ordinate can be equal. Let $\check{\mu}$ be any coupling of the marginals $\mu_1(x_1)$ and $\mu_2(x_2)$ and $\overline{\mu}(d(y,y')|x_1,x_2)$ be the maximal coupling of $\mu_1(\cdot|x_1)$ and $\mu_2(\cdot|x_2)$ then one way to achieve our given objective is to sample from the distribution associated to the probability:
$$
\check{\mu}\left(d(x_1,x_2)\right)\overline{\mu}\left(d(y,y')|x_1,x_2\right),
$$
which can be broadly thought of as a posterior. Note that it is straight-forward to show that the marginal of $(y,x_j)$ is $\mu_j$ for $j\in\{1,2\}$. This concept can be straight-forwardly extended to the case of 4 targets, as in \cite[Section 3.2.2.]{disc_model}, using much the same construction.

\subsection{Method}

The method that we pursue is an adaptation of the approach in \cite{disc_model} which was originally developed for unbiased inference for Bayesian inverse problems. The contribution here is that we provide an original coupling construction that one might expect is easier to apply than to conventional Markov kernels such as Metropolis-Hastings. \textcolor{black}{In addition, to verify the unbiasedness, one needs to merge the theory of the underdampled Langevin dynamics \cite{disc_lange}, and unbiased theory developed in \cite{disc_model}}. The former will be discussed later, with various assumptions stated before the main theorem is presented. We will now discuss our unbiased strategy, based on the work
in  \cite{disc_model}.

\subsubsection{Overall Strategy}

Let $l_*\in\mathbb{N}_0$ be given and $\mathbb{P}_L$ be any positive probability mass function on $\mathbb{N}_{l_*}:=\{l_*,l_*+1,\dots\}$. Let $\{\xi_l\}_{l\in \mathbb{N}_{l_*}}$
be any sequence of independent random variables, such that
\begin{eqnarray*}
\mathbb{E}[\xi_{l_*}] & = & \pi_{l_*}(\varphi) \\
\mathbb{E}[\xi_{l}] & = & \pi_{l}(\varphi) - \pi_{l-1}(\varphi) =: [\pi_l-\pi_{l-1}](\varphi) \quad l\in\{l_*+1,l_*+2,\dots\}.
\end{eqnarray*}
Now, let $L$ be a random variable with probability $\mathbb{P}_L$ that is independent of the sequence $\{\xi_l\}_{l\in \mathbb{N}_{l_*}}$ then
\begin{align}
\label{eq:single_term_est}
\widehat{\pi}(\varphi) = \frac{\xi_l}{\mathbb{P}_L(l)},
\end{align}
is an unbiased estimator of $\pi(\varphi)$; see \cite{mcl,rhee} for the initial statement and proof. Moreover, if
\begin{equation}\label{eq:finite_var}
\sum_{l\in\mathbb{N}_{l_*}}\frac{\mathbb{E}[\xi_l^2]}{\mathbb{P}_L(l)} < +\infty,
\end{equation}
the estimator $\widehat{\pi}(\varphi)$ has finite variance. There is also the independent sum-estimator, which can work better than this estimator, but we shall not discuss that for now. The main challenge is then to construct the sequence $\{\xi_l\}_{l\in \mathbb{N}_{l_*}}$. 

Typically, one will run $M \in \mathbb{N}$ independent replicates of \eqref{eq:single_term_est}, where for each replicate $i$, $l_i\sim \mathbb{P}_L$, and then use the average
$$
\left(\widehat{\pi}(\varphi)\right)_{\text{avg}} := \frac{1}{M}\sum_{i=1}^M \left(\widehat{\pi}(\varphi)\right)^{(i)},
$$
where $\left(\widehat{\pi}(\varphi)\right)^{(i)}$ represents the $i$-th independent replicate of the estimate.
\subsubsection{Unbiased Approximation of $\pi_{l_*}(\varphi)$}\label{sec:ub_mcmc}

Throughout the section $l\in\mathbb{N}_{l_*}$ is fixed; although we are interested in the case $l=l_*$ the subsequent exposition holds for any fixed $l$.
We will use a strategy that was developed in \cite{glynn2} (see also \cite{jacob1}) for obtaining the given estimator. The approach is to construct a Markov chain
on the product space $\mathsf{U}^2$, where $\mathsf{U}=\mathbb{R}^{2d}$. From herein, we consider the Markov transition associated to \eqref{eq:disc_dyn2}-\eqref{eq:disc_dyn3}
over a unit time interval. We denote this kernel as $K_l:\mathsf{U}\rightarrow\mathscr{P}(\mathsf{U})$, where $\mathscr{P}(U)$ are the collection of probability measures on measurable space $(\mathsf{U},\mathcal{U})$, $\mathcal{U}$ is the Borel $\sigma-$field on $\mathbb{R}^{2d}$. The reason for this, will become apparent as we continue in our exposition. 

In order to follow the construction in \cite{glynn2} we will need to construct a Markov chain, $\{U_{n,l,}\tilde{U}_{n,l}\}_{n\in\mathbb{N}_0}$, on $(\mathsf{U}^2,\mathcal{U}\vee\mathcal{U})$ so that marginally, at a given discrete time,
new positions of the chain have the kernel $K_l$ as a marginal. More precisely, we need to construct a Markov kernel, $\check{K}_l:\mathsf{U}^2\rightarrow\mathscr{P}(\mathsf{U}^2)$,  so that for any $(u_l,\tilde{u}_l,A)\in\mathsf{U}^2\times\mathcal{U}$:
\textcolor{black}{
\begin{align}
\label{eq:prope}
\int_{A\times\mathsf{U}} \check{K}_l\left((u_l,\tilde{u}_l),d(u_l',\tilde{u}_l')\right) &= \int_A K_l(u_l,du_l'), \\
\nonumber
\int_{\mathsf{U}\times A} \check{K}_l\left((u_l,\tilde{u}_l),d(u_l',\tilde{u}_l')\right) &= \int_A K_l(\tilde{u}_l,d\tilde{u}_l').
\end{align}}

To build the coupling $\check{K}_l$ we shall decompose this into a mixture of two kernels $\check{Q}_l:\mathsf{U}^2\rightarrow\mathscr{P}(\mathsf{U}^2)$ and
$\check{P}_l:\mathsf{U}^2\rightarrow\mathscr{P}(\mathsf{U}^2)$. We will explain why this is required, as the discussion progresses. The simulation of the first kernel $\check{Q}_l$ is
described in \autoref{alg:q_l_sim}. On inspection of 
\autoref{alg:q_l_sim}, it is clear that this samples from a coupling of $K_l$ which is such that if $u=\tilde{u}$ then
$u'=\tilde{u}'$. We note however, that if $u\neq \tilde{u}$, then there is zero probability that $u'=\tilde{u}'$ and being able to achieve the latter is critical in our construction. This motivates our second kernel $\check{P}_l$ whose simulation is given in \autoref{alg:p_l_sim}.  

\begin{rem}
\textcolor{black}{
The kernel $\check{P}_l$ is essentially the same as $\check{Q}_l$ except at the final time step, one has a non-zero probability that $u'=\tilde{u}'$ \emph{irrespective} of $u,\tilde{u}$ and note that if $u=\tilde{u}$ one will always simulate $u'=\tilde{u}'$. Note also, that  it easily follows that $\check{Q}_l$ samples from a coupling of $K_l$; the kernel is a simple application of the method in \autoref{sec:cond_max_coup}.}
\end{rem}
In \autoref{alg:p_l_sim}, step 4.~the maximal coupling can be sampled, for instance, using the maximal coupling algorithm in \cite{thor} or the reflection maximal coupling described in \cite{jacob1}, where the later performs better for high-dimensional models and therefore we adopt it. Then, for some $\alpha\in(0,1)$ fixed we set 
$$
\check{K}_l = \alpha \check{Q}_l + (1-\alpha)\check{P}_l.
$$ 
The reason for including the kernel $\check{Q}_l $ is that it encourages the positions of $u_{n,l},\tilde{u}_{n,l}$ to be close, but can never achieve equality; this latter task can be achieved by $\check{P}_l$ and is why this latter kernel is used.

To initialize the Markov chain, for some initial probability $\nu_l\in\mathscr{P}(\mathsf{U})$ we use the following idea as in \cite{glynn2}. Let $\check{\nu}_l$ be any coupling of $(\nu_l,\nu_l)$ then the initial probability is taken as:
$$
\overline{\nu}_l\left(d(u',\tilde{u}')\right) = \int_{\mathsf{U}^2} \check{\nu}_l\left(d(u,\tilde{u})\right)K_l(u,du')\delta_{\{\tilde{u}\}}(d\tilde{u}'),
$$
which evolves from the Markov kernel $\check{K}_l$ and satisfies the properties discussed through \eqref{eq:prope}.

Recall we are interested in producing an unbiased estimator
of the quantity
\begin{equation}
\pi_l(\varphi)  := \varphi(U_{k,l}) + \sum_{n=k+1}^{\infty}\{\varphi(U_{n,l})-\varphi({U}_{n-1,l})\}.
\end{equation}
However doing so can be challenging, in particular because we have an infinite sum. Therefore what can be done instead, as eluded too from the discussion above,
is to construct another chain $(\tilde{U}_{n,l})$ such that $U_{n,l}=\tilde{U}_ {n,l}$, for $n \geq \tau$, where when the two chains meet, this is known as the
\textit{meeting time} $\tau_l$, which is defined as 
follows
$$
\tau_l := \inf\{n\geq 1: U_{n,l}=\tilde{U}_{n,l}\},
$$
where we require that $\tau_l$ is almost surely finite and that for each $n\geq \tau_l$, $U_{n,l}=\tilde{U}_ {n,l}$ almost surely. Therefore what we can 
do is replace the infinite limit with the meeting time $\tau_l$ with a lag of 1, where now we 
have the following unbiased estimator of $\pi_l(\varphi)$ for any $k\in\mathbb{N}_0$
\begin{equation}\label{eq:stan_est}
\widehat{\pi_l(\varphi)}_k := \varphi(U_{k,l}) + \sum_{n=k+1}^{\tau_{l}-1}\{\varphi(U_{n,l})-\varphi(\tilde{U}_{n,l})\},
\end{equation}
with $\sum_{n=k+1}^{\tau_{l}-1}\{\varphi(U_{n,l})-\varphi(\tilde{U}_{n,l})\} =0$ if $\tau_l-1 < k+1$.

\subsubsection{Assumptions for $\widehat{\pi}_l(\varphi)$}
In order to prove that \eqref{eq:stan_est} is an unbiased estimator, with finite variance and expect cost, 
we summarize these weak assumptions, without going into exact details.
\begin{itemize}
 \item[{1}.] We require \textit{convergence} of the marginal chains, i.e, 
 $$
 \lim_{n \rightarrow \infty} \mathbb{E}[\varphi(U_{n,l})] =\mathbb{E}_{\pi}[\varphi(U)]. 
 $$
 \item[{2}.] The meeting time $\tau_l := \inf\{n\geq 1: U_{n,l}=\tilde{U}_{n,l}\}$ has \textit{geometric tails}
 $$
\mathbb{P}(\tau > n) \leq C\rho^n, \quad \textrm{for} \quad C<\infty, \  \rho \in (0,1).
$$
\item[{3}.] \textit{Faithfulness}: once both chains meet at the meeting time $\tau_l$, then $U_{n,l} = \tilde{U}_{n,l}$ for $n \geq \tau_l$.
 \end{itemize}
We remark that a time-averaged extension is also possible; let 
$(m,k)\in\mathbb{Z}^+\times \mathbb{Z}^+$, with $m\geq k$, then we have
\begin{equation}\label{eq:time_ave}
\widehat{\pi_l(\varphi)}_{T,k,m} := \frac{1}{m-k+1}\sum_{n=k}^m\varphi(U_{n,l}) + \sum_{n=k+1}^{\tau_{l}-1}\Big(1\wedge\frac{n-k}{m-k+1}\Big)\{\varphi(U_{n,l})-\varphi(\tilde{U}_{n,l})\},
\end{equation}
which recovers \eqref{eq:stan_est} in the case $m=k$.

\begin{figure}[h!]
\centering
\includegraphics[width=0.70\textwidth]{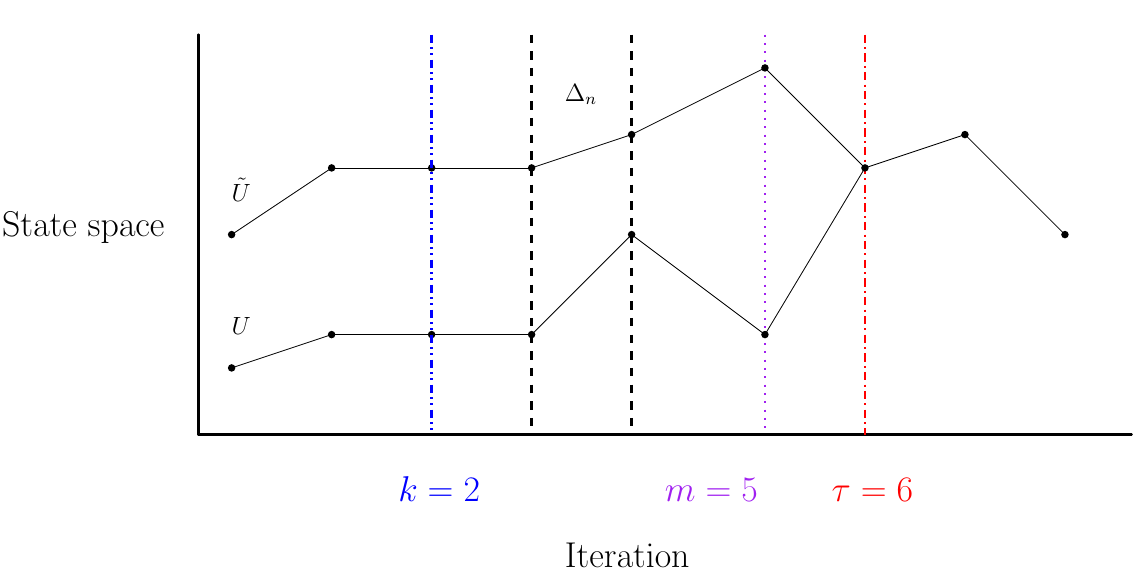}
 \caption{Example illustration of the time-averaged estimator $\widehat{\pi_l(\varphi)}_{T,k,m}$ from \eqref{eq:time_ave}, which includes the meeting time $\tau=7$
with respect to the two chains $U$ and $\tilde{U}$. Each line corresponds to a chain, for which we are aiming to couple.}
 \label{fig:diagram}
\end{figure}

\subsubsection{Unbiased Approximation of $[\pi_l-\pi_{l-1}](\varphi)$}\label{sec:ub_inc_gen}

Our objective is now to provide, for $l\in\{l_*+1,l*+2,\dots\}$ fixed, an unbiased estimator of $[\pi_l-\pi_{l-1}](\varphi)$, which will essentially use the approach detailed in the previous section. Indeed, one could simply use the method outlined above, independently, for $\pi_l$ and $\pi_{l-1}$ and independently for each $l\in\{l_*+1,l*+2,\dots\}$. However, this is unlikely to provide an estimator that can achieve \eqref{eq:finite_var} and hence the variance of such an approach is infinite and not useful in practice. We therefore present an alternative method.
\\\\
The idea we use is to generate a Markov chain on $(\mathsf{Z},\mathcal{Z})$, where $\mathsf{Z}=\mathsf{U}^4$ and $\mathcal{Z}=(\mathcal{U}\vee\mathcal{U})\vee(\mathcal{U}\vee\mathcal{U})$ and we write
$$
Z_{n,l,l-1} = \left((U_{n,l},\tilde{U}_{n,l}),(U_{n,l-1},\tilde{U}_{n,l-1})\right).
$$
The associated Markov kernel $\check{K}_{l,l-1}:\mathsf{Z}\rightarrow\mathscr{P}(\mathsf{Z})$ will be developed below.
The construction of our Markov chain, should be that, marginally $\{U_{n,l}\}_{n\in\mathbb{N}_0}$ (resp.~$\{\tilde{U}_{n,l}\}_{n\in\mathbb{N}_0}$) is a Markov chain with kernel $K_l$
and marginally $\{U_{n,l-1}\}_{n\in\mathbb{N}_0}$ (resp.~$\{\tilde{U}_{n,l-1}\}_{n\in\mathbb{N}_0}$) is a Markov chain with kernel $K_{l-1}$.  
It is explicitly assumed that (at the very least) $(Z_{n,l,l-1})_{n\in\mathbb{N}_0}$ is constructed so that the stopping time $\check{\tau}_{l,l-1}:=\tau_l\vee\tau_{l-1}$ is almost surely finite. In addition, the pair of chains on each level should be faithful, i.e. for $s\in\{l,l-1\}$, we have 
\begin{align}\label{eqn:faithfulness_bothlevels}
U_{n,s}=\tilde{U}_{n,s}, \mbox{ for all } n\geq\tau_{s}.
\end{align}
Hence for time $n\geq\check{\tau}_{l,l-1}$, $Z_{n,l,l-1}$ only has a distinct state on each level. 

Our Markov kernel $\check{K}_{l,l-1}$ will be a type of mixture constituting two distinct Markov kernels $\check{Q}_{l,l-1}:\mathsf{Z}\rightarrow\mathscr{P}(\mathsf{Z})$ and 
$\check{P}_{l,l-1}:\mathsf{Z}\rightarrow\mathscr{P}(\mathsf{Z})$. We describe the simulation of $\check{Q}_{l,l-1}$ in \autoref{alg:q_l_l-1_sim}. This kernel
samples, for $s\in\{l,l-1\}$, from a coupling of $K_s$ which is such that if $u_s=\tilde{u}_s$ then
$u_s'=\tilde{u}_s'$ and if $u_s\neq \tilde{u}_s$, then there is zero probability that $u_s'=\tilde{u}_s'$. The simulation is also such that the outputs across the levels should be quite dependent and, in some sense, close. The additional kernel $\check{P}_{l,l-1}$ that we require is described in \autoref{alg:p_l_l-1_sim}. The kernel $\check{P}_{l,l-1}$ is essentially the same as $\check{Q}_{l,l-1}$ except at the final time step, one has a non-zero probability that, for $s\in\{l,l-1\}$, $u_s'=\tilde{u}_s'$ \emph{irrespective} of $u_s,\tilde{u}_s$ and note that if $u_s=\tilde{u}_s$ one will always simulate $u_s'=\tilde{u}_s'$. Note also, that  it easily follows that $\check{Q}_{l,l-1}$ samples from a couplings of $K_l$ and $K_{l-1}$. In \autoref{alg:p_l_l-1_sim}, step 4.~the synchronous pairwise reflection maximal coupling is described in (\textcolor{black}{\cite{disc_model} Section 3.2.3}) and the simulation thereof is similar to the reflection maximal coupling. Then for $\alpha\in(0,1)$ we set 
$$
\check{K}_{l,l-1}=\mathbb{I}_{D^2}(u_l,\tilde{u}_l,u_{l-1},\tilde{u}_{l-1})\check{Q}_{l,l-1} + \mathbb{I}_{(D^2)^c}(u_l,\tilde{u}_l,u_{l-1},\tilde{u}_{l-1})[\alpha\check{Q}_{l,l-1} + (1-\alpha)\check{P}_{l,l-1}],
$$
where $D=\{(u,\tilde{u})\in\mathsf{U}^2:u=\tilde{u}\}$.

To initialize the Markov chain, for some initial probabilities, $s\in\{l,l-1\}$, $\nu_s\in\mathscr{P}(\mathsf{U})$ we use the following idea. 
We need a kernel $\overline{K}_{l,l-1}:\mathsf{U}^2\rightarrow\mathscr{P}(\mathsf{U}^2)$ described in \autoref{alg:k_l_l-1_sim} for the initialization.
Let $\check{\nu}_s$ be any coupling of $(\nu_s,\nu_s)$ and $\check{\nu}_{l,l-1}$ be a coupling of $(\check{\nu}_l,\check{\nu}_{l-1})$  then the initial probability is taken as:
$$
\overline{\nu}_{l,l-1}\left(dz_{l,l-1}'\right) = \int_{\mathsf{U}^4} \check{\nu}_{l,l-1}\left(dz_{l,l-1}\right)\overline{K}_{l,l-1}\left((u_l,u_{l-1}),d(u_l',u_{l-1}')\right)\delta_{\{\tilde{u}_l,\tilde{u}_{l-1}\}}\left(d(\tilde{u}_l',\tilde{u}_{l-1})\right).
$$
The process is then sampled at subsequent time-points, using the Markov kernel $\check{K}_{l,l-1}$.
\\\\
Finally then one can estimate $[\pi_l-\pi_{l-1}](\varphi)$ as follows, for any $k\in\mathbb{N}_0$:
\begin{equation}\label{eq:stan_est_inc}
\widehat{[\pi_l-\pi_{l-1}](\varphi)}_k := 
\widehat{\pi_l(\varphi)}_k - \widehat{\pi_{l-1}(\varphi)}_k,
\end{equation}
where $\widehat{\pi_s(\varphi)}_k$ is computed using \eqref{eq:stan_est} 
One can also employ time-averaging, for $(m,k)\in\mathbb{N}_0\times \mathbb{N}_0$ satisfying $m\geq k$:
\begin{align}\label{eq:time_ave_inc}
\widehat{[\pi_l-\pi_{l-1}](\varphi)}_{T,k,m} &:= 
\widehat{\pi_l(\varphi)}_{T,k,m} - \widehat{\pi_{l-1}(\varphi)}_{T,k,m},
\end{align}
where $\widehat{\pi_s(\varphi)}_{T,k,m}$ is computed using \eqref{eq:time_ave} 

\subsubsection{Final Methodology and Estimator}

\textcolor{black}{We now consolidate the above discussion by summarizing our proposed methodology to unbiasedly estimate $\pi(\varphi)$, which is presented in Algorithm
\ref{alg:final}. We also provide a summary of our proposed methodology, through a simple diagram given in Figure \ref{fig:kernel}.}
\begin{figure}[h!]
\centering
\includegraphics[width=0.55\textwidth]{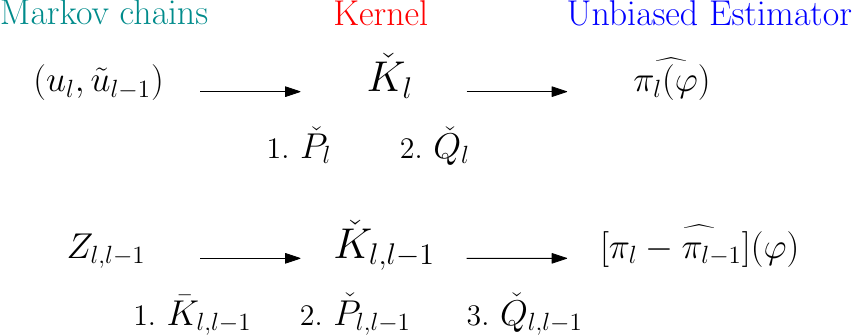}
 \caption{\textcolor{black}{Cartoon description of the required procedure to obtain unbiased estimators, related to the various kernels required.}}
 \label{fig:kernel}
\end{figure} 

\begin{algorithm}[!h]
\textcolor{black}{
\caption{Unbiased estimator $\widehat{\pi(\varphi)}$.}
\label{alg:final}
{\bf Input}: Probability mass function $\mathbb{P}_L$, \\ Initialized Markov chains $(u_0,\tilde{u}_0)=\left((x_{0},v_{0}),(\tilde{x}_{0},\tilde{v}_{0})\right), \quad (u_{0,s},\tilde{u}_{0,s})=\left((x_{0,s},v_{0,s}),(\tilde{x}_{0,s},\tilde{v}_{0,s})\right)$.
\begin{enumerate}
\item{Sample $L\sim\mathbb{P}_L$.}
\item{If $L=l_*$, set $\tilde{U}_{0,l_*} = \tilde{u}_0$ sampling $U_{0,l_*}$ through \eqref{eq:disc_dyn2} - \eqref{eq:disc_dyn3} at level $l_*$ up to time 1. Generate 
$\{U_{n,l},\tilde{U}_{n,l}\}_{n\in\mathbb{N}}$ according to $\check{K}_{l_*}$,  
and compute
$\widehat{\pi_0(\varphi)}_k$ in \eqref{eq:stan_est} or 
$\widehat{\pi_0(\varphi)}_{T,k,m}$ in \eqref{eq:time_ave}. }
\item{If $L>l_*$, set $(\tilde{U}_{0,L}, \tilde{U}_{0,L-1}) = (\tilde{u}_{0,F},\tilde{u}_{0,C})$ sampling $(U_{0,L}, U_{0,L-1})$ from $\overline{K}_{L,L-1}$ by running \autoref{alg:k_l_l-1_sim}. Generate $\{Z_{n,L,L-1}\}_{n\in\mathbb{N}}$ according to 
$\check{K}_{L,L-1}$, and compute 
$\widehat{[\pi_L-\pi_{L-1}](\varphi)}_k $ in \eqref{eq:stan_est_inc} or 
$\widehat{[\pi_L-\pi_{L-1}](\varphi)}_{T,k,m}$ in \eqref{eq:time_ave_inc}.
}
\end{enumerate}
{\bf Output}:  Single-term estimator
\begin{equation}\label{eq:basic_ub_est}
\widehat{\pi(\varphi)}_{S,k} := \frac{1}{\mathbb{P}_L(L)}\Big(\mathbb{I}_{\{l_*\}}(L)~\widehat{\pi_{l_*}(\varphi)}_k + \mathbb{I}_{\{l_*+1,l_*+2,\dots\}}(L)~\widehat{[\pi_L-\pi_{L-1}](\varphi)}_k \Big),
\end{equation}
or time-averaged estimator
\begin{equation}\label{eq:basic_ub_est_time}
\widehat{\pi(\varphi)}_{S,T,k,m} := \frac{1}{\mathbb{P}_L(L)}\Big(\mathbb{I}_{\{l_*\}}(L)
~\widehat{\pi_{l_*}(\varphi)}_{T,k,m} + \mathbb{I}_{\{l_*+1,l_*+2,\dots\}}(L)~
\widehat{[\pi_L-\pi_{L-1}](\varphi)}_{T,k,m} \Big).
\end{equation}}
\end{algorithm}

\section{Theoretical Properties}\label{sec:theory}

In this section we provide theory related to the discussed methodology in \autoref{sec:approach}. Specifically we provide theoretical justification,
where we provide our main result which states that our estimator has finite variance, and either has finite expected cost, or has finite cost with a high probability. 
In order to prove this we require a number of standard assumptions, which have been used in previous works. We omit the proof of theorem in this section, where we defer it to the Appendix.
\textcolor{black}{For the below algorithm we have $s\in \{F,C\}$, where $F$ denotes fine, and $C$ denotes coarse, in terms of the level of discretization.}

\subsection{Assumptions}
Before discussing our required assumptions, we provide some new common notation in the setting of Lyapunov functions.
Let $0<C<+\infty$ be a constant and $\mathsf{d}$ a metric on $\mathsf{U}$; we define the set
\begin{eqnarray*}
\mathsf{B}(C,\Delta_l,\mathsf{d})  & := &  \{u_{1:4}\in\mathsf{U}^4:\forall(i,j)\in\{1,\dots,4\}, \mathsf{d}(u_i,u_j)\leq C\Delta_l\}. 
\end{eqnarray*}
In what follows, for any $l\in\mathbb{N}_{l_*}$, under assumptions (\textcolor{black}{see \cite{disc_lange} Section 2.3}) the Markov kernel $K_l$ admits a unique invariant measure $\pi_l$.
We introduce a Lyapunov function $\mathscr{V}:\mathsf{U}\rightarrow[1,\infty)$ which will be used in our assumptions below. For $f:\mathsf{U}\rightarrow\mathbb{R}$
and a given Lyapunov function $\mathscr{V}$, we define the collection of functions $\mathcal{L}_{\mathscr{V}}:=\{f:\sup_{u\in\mathsf{U}}|f(u)|/\mathscr{V}(u)<+\infty\}$
and we write $\|f\|_{\mathscr{V}}=|f(u)|/\mathscr{V}(u)$.

\begin{hypA}\label{ass:1}
\begin{enumerate}
\item{There exist a $\mathscr{V}:\mathsf{U}\rightarrow[1,\infty)$, $(\lambda,b,\mathsf{C})\in(0,1)\times(0,\infty)\times\mathcal{U}$ such that for any $(l,u)\in\mathbb{N}_{l_*}\times\mathsf{U}$
$$
K_l(\mathscr{V})(u) \leq \lambda \mathscr{V}(u) + b\mathbb{I}_{\mathsf{C}}(u).
$$
}
\item{There exists a $(\rho,C)\in(0,1)\times(0,\infty)$ such that for any $n\in\mathbb{N}$
$$
\sup_{l\in \mathbb{N}_{l_*}}\sup_{u\in\mathsf{U}}\sup_{\varphi\in\mathcal{L}_{\mathscr{V}}}\frac{|K_l^n(\varphi)(u)-\pi_l(\varphi)|}{\mathscr{V}(u)} \leq C\|\varphi\|_{\mathscr{V}}\rho^n,
$$
with $\mathscr{V}$ as in 1..
}
\end{enumerate}
\end{hypA}
\begin{hypA}\label{ass:2}
There exist $(C,\rho)\in(0,\infty)\times(0,1)$ such that: 
\begin{enumerate}
\item{For any $n\in\mathbb{N}$
$$
\mathbb{E}[\mathbb{I}_{\{\tau_0> n\}}]\leq C\rho^n.
$$}
\item{For any $(l,n,z)\in\mathbb{N}\times\mathbb{N}\times\mathsf{X}^4$
$$
\mathbb{E}[\mathbb{I}_{\{\tau_l> n\}}|Z_{0,l,l-1}=z]\leq C\rho^n.
$$
}
\end{enumerate}
\end{hypA}
\begin{hypA}\label{ass:3}
There exist a $C<\infty$  and metric on $\mathsf{X}$, $\tilde{\mathsf{d}}:\mathsf{X}\times\mathsf{X}\rightarrow\mathbb{R}^+$ such that for any \\ $(l,\varphi,(x,y))\in\mathbb{Z}^+\times\mathcal{B}_b(\mathsf{X})\cap\textrm{Lip}_{\tilde{\mathsf{d}}}(\mathsf{X})\times\mathsf{X}\times\mathsf{X}$
$$
|K_l(\varphi)(x)-K_l(\varphi)(y)| \leq C (\|\varphi\|\vee \|\varphi\|_{\textrm{Lip}})\tilde{\mathsf{d}}(x,y).
$$
\end{hypA}
\begin{hypA}\label{ass:4}
There exist $(C,\beta_1)\in(0,\infty)\times(0,\infty)$ such that for any $(l,\varphi)\in\mathbb{N}\times\mathcal{B}_b(\mathsf{X})$
\begin{enumerate}
\item{
$
|[\pi_l-\pi](\varphi)| \leq C \|\varphi\| \Delta_l^{\beta_1}.
$
}
\item{
$
\sup_{x\in\mathsf{X}}| K_l(\varphi)(x)-K_{l-1}(\varphi)(x) | \leq C  \|\varphi\| \Delta_l^{\beta_1}.
$
}
\end{enumerate}
\end{hypA}
\begin{hypA}\label{ass:5}
\begin{enumerate}
\item{There exist $(C,\epsilon,\beta_2)\in(0,\infty)^3$, such that for the metric  $\tilde{\mathsf{d}}$ as in (A\ref{ass:3}) and any $(l,n)\in\mathbb{N}\times\mathbb{N}$
\begin{eqnarray*}
\mathbb{E}[\mathbb{I}_{\mathsf{B}(C,\Delta_l^{\beta_2},\tilde{\mathsf{d}})^c}(Z_{0,l,l-1})] & \leq & C\Delta_l^{\beta_2(2+\epsilon)}, \\
\mathbb{E}[\mathbb{I}_{\mathsf{B}(C,\Delta_l^{\beta_2},\tilde{\mathsf{d}})^c\times B(C,\Delta_l^{\beta_2},\tilde{\mathsf{d}})}(Z_{n,l,l-1},Z_{n-1,l,l-1})] & \leq & C\Delta_l^{\beta_2(2+\epsilon)}.
\end{eqnarray*}}
\item{$\tilde{d}^{4(2+\epsilon)}\in\mathcal{L}_{\mathscr{V}\otimes\mathscr{V}}$, with $\tilde{\mathsf{d}},\epsilon$ as in 1.~and $\mathscr{V}$ as in (A\ref{ass:1}).}
\end{enumerate}
\end{hypA}

\subsection{Discussion of Assumptions}
\textcolor{black}{
Let us briefly discuss each assumption in-turn. Assumption (A\ref{ass:1}) is related to the ergodicity of the underdamped Langevin sampler.
The specific result we have used, arises in the work \cite{disc_lange}, where the authors show that one can attain $\mathcal{V}$-uniform ergodicity, for
the discretized set of equations. These conditions hold so long as $l_*$ is large enough, which we shall also assume. 
Assumption (A\ref{ass:2}) has been considered in \cite{ub_grad}, and is shown to hold in a related context [\cite{ub_grad}, Lemmata 14 and 20].
Assumption (A\ref{ass:3})-(A\ref{ass:4}) relates to the continuity properties of the kernel and the discretization bias of the problem. In particular (A\ref{ass:4})
2. states that moves under pairs of kernels at consecutive levels, stay close on average,
given that they are initialized at the same point.
Assumption (A\ref{ass:5}) 1. is a non-standard assumption which can be verified in complex settings, i.e. [\cite{ub_grad}, Lemma 16].
Also if one can establish that the pairs of chains are uniformly ergodic, with an invariant measure $\check{\pi}_{l,l-1}$, then it is
reasonable to assume that
$$
\check{\pi}_{l,l-1}\Big((\varphi \otimes1-1\otimes \varphi)^2\Big) \leq C\Delta_l^{2\beta_2}.
$$
The other key assumption which is different in our work is (A\ref{ass:5})
2.~which is a simple integrability assumption on the metric $\tilde{\mathsf{d}}$ which allows one to remove the need for $\mathsf{U}$ to be compact in 
\cite{disc_model} and is not the case in our context.}

\subsection{Main Result and Implications}

We are now in a position to state our main result, which is that our estimator is unbiased with finite variance.  

\begin{prop}\label{prop:main_prop}
Assume (A\ref{ass:1}-\ref{ass:4}). Then there exists a choice of positive probability mass function $\mathbb{P}_L$, such that for 
the metric $\tilde{\mathsf{d}}$ in (A\ref{ass:3}) and any $\varphi\in \mathcal{B}_b(\mathsf{X})\cap\textrm{\emph{Lip}}_{\tilde{\mathsf{d}}}(\mathsf{X})$,  
\eqref{eq:basic_ub_est} is an unbiased and finite variance estimator of $\pi(\varphi)$. 
\end{prop}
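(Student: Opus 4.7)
The plan is to verify the two conditions of the single-term randomization identity \eqref{eq:single_term_est}--\eqref{eq:finite_var} for the random sequence $\xi_{l_*}=\widehat{\pi_{l_*}(\varphi)}_k$ and, for $l>l_*$, $\xi_l=\widehat{[\pi_l-\pi_{l-1}](\varphi)}_k$: (a) each $\xi_l$ is unbiased for its target and the telescoping series $\sum_{l\geq l_*}\mathbb{E}[\xi_l]$ equals $\pi(\varphi)$; and (b) a positive probability mass function $\mathbb{P}_L$ on $\mathbb{N}_{l_*}$ can be chosen such that $\sum_{l\geq l_*}\mathbb{E}[\xi_l^2]/\mathbb{P}_L(l)<\infty$. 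Once these are in place, the conclusion follows from the doubly randomized construction already spelled out in \autoref{sec:ub_mcmc}.

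For (a), I would follow the Glynn--Rhee identity as adapted in \cite{jacob1}. The lag-1 initialization built into \autoref{alg:final} is exactly what is needed to ensure the distributional shift $U_{n,l}\stackrel{d}{=}\tilde{U}_{n+1,l}$; consequently
\[
\mathbb{E}[\varphi(U_{k,l})]+\sum_{n=k+1}^{N}\mathbb{E}[\varphi(U_{n,l})-\varphi(\tilde{U}_{n,l})]\;=\;\mathbb{E}[\varphi(\tilde{U}_{N+1,l})]\;\longrightarrow\;\pi_l(\varphi),
\]
where the convergence is by the $\mathscr{V}$-uniform ergodicity in (A\ref{ass:1}). The geometric meeting-time tails of (A\ref{ass:2}) together with the faithfulness of $\check{K}_l$ (so the summand vanishes past $\tau_l$) let me interchange the infinite sum and the expectation by dominated convergence, yielding $\mathbb{E}[\widehat{\pi_l(\varphi)}_k]=\pi_l(\varphi)$. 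Applying the same argument marginally at both levels of $Z_{n,l,l-1}$ and using linearity gives $\mathbb{E}[\xi_l]=\pi_l(\varphi)-\pi_{l-1}(\varphi)$ for $l>l_*$, and the telescoping series converges to $\pi(\varphi)$ thanks to the weak-error bound $|\pi_l(\varphi)-\pi(\varphi)|\leq C\|\varphi\|\Delta_l^{\beta_1}$ in (A\ref{ass:4}).

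The main obstacle is (b): establishing a quantitative bound of the form $\mathbb{E}[\xi_l^2]\leq C\Delta_l^{\beta}$ with $\beta>0$. Here I would mimic the argument of \cite{disc_model}, expanding $\xi_l$ into sums of four-term increments $\varphi(U_{n,l})-\varphi(\tilde{U}_{n,l})-\varphi(U_{n,l-1})+\varphi(\tilde{U}_{n,l-1})$ and applying Cauchy--Schwarz together with the geometric tails of $\tau_l\vee\tau_{l-1}$ from (A\ref{ass:2}) to control the random length of the sum. On the ``close'' event $\{Z_{n,l,l-1}\in\mathsf{B}(C,\Delta_l^{\beta_2},\tilde{\mathsf{d}})\}$ the Lipschitz regularity of $K_l$ from (A\ref{ass:3}) and the one-step closeness $\sup_x|K_l(\varphi)(x)-K_{l-1}(\varphi)(x)|\leq C\|\varphi\|\Delta_l^{\beta_1}$ from (A\ref{ass:4}) deliver the required power of $\Delta_l$; on its complement, the probability bound in (A\ref{ass:5}) together with the integrability of $\tilde{\mathsf{d}}$ against $\mathscr{V}\otimes\mathscr{V}$ handles the tail without requiring $\mathsf{U}$ to be compact as in \cite{disc_model}. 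This step is by far the most technical because it requires propagating the pairwise-reflection and synchronous couplings from \autoref{sec:cond_max_coup} through the $\mathscr{V}$-Lyapunov analysis of (A\ref{ass:1}).

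Given such a bound $\mathbb{E}[\xi_l^2]\lesssim \Delta_l^{\beta}$, I would close the proof by choosing, e.g., $\mathbb{P}_L(l)\propto \Delta_l^{\beta/2}(l-l_*+1)^{-(1+\delta)}$ for a small $\delta>0$. This yields a strictly positive probability mass function on $\mathbb{N}_{l_*}$ and
\[
\sum_{l\geq l_*}\frac{\mathbb{E}[\xi_l^2]}{\mathbb{P}_L(l)}\;\lesssim\;\sum_{l\geq l_*}\Delta_l^{\beta/2}(l-l_*+1)^{1+\delta}\;<\;\infty,
\]
so \eqref{eq:finite_var} holds and \eqref{eq:basic_ub_est} is unbiased with finite variance. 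The variance bound in the third paragraph is the genuinely hard piece; the rest is essentially bookkeeping once the coupling constructions of \autoref{sec:approach} are in hand.
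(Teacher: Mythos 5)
Your proposal is correct and follows essentially the same route as the paper: the paper's proof simply invokes \cite[Theorem 2.1]{disc_model} (Glynn--Rhee unbiasedness plus the second-moment bound on the increments, then a choice of $\mathbb{P}_L$), and its only new ingredient is a lemma bounding $\mathbb{E}[\tilde{\mathsf{d}}(U_{n,l},U_{n,l-1})^{2+\epsilon}]$ on the complement of the ``close'' event via (A\ref{ass:5}) and Cauchy--Schwarz with the Lyapunov drift of (A\ref{ass:1}), which is precisely the non-compactness fix you identify in your third paragraph.
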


\begin{rem}
\label{rem:1}
As in \cite[Theorem 2.1.]{disc_model}, one can easily extend this result to the case of \eqref{eq:basic_ub_est_time}.
\end{rem}

\autoref{prop:main_prop} establishes that there exists a choice of positive probability mass function $\mathbb{P}_L$ which ensures that 
\eqref{eq:basic_ub_est} is unbiased and finite variance, but not how to find one. That is the topic of the current discussion and follows very much that given in \cite[Section 2.4.3.]{disc_model}. Under our assumptions, noting \eqref{eq:finite_var}, using the argument contained in the Appendix and in \cite{disc_model}, the variance of \eqref{eq:basic_ub_est} is upper-bounded by (with $C<\infty$ a generic constant that does not depend upon $l$, but whose value may change on each appearance) 
\begin{align}
\label{eq:bound_cost}
C\sum_{l\in\mathbb{N}_{l_*}}\frac{\mathbb{E}\left[\xi_l^2\right]}{\mathbb{P}_L(l)} ~\leq ~ C\sum_{l\in\mathbb{N}_{l_*}}\frac{\Delta_l^{2\beta}}{\mathbb{P}_L(l)},
\end{align}
with $\xi_l := \widehat{[\pi_l-\pi_{l-1}](\varphi)}_k$ defined in \eqref{eq:stan_est_inc} and $\beta=\min\left\{\beta_1,\tfrac{\beta_2}{2}\right\}$. The expected cost is 
upper-bounded by 
$$
C\sum_{l\in\mathbb{N}_{l_*}}\mathbb{E}\left[\check{\tau}_{l,l-1}\right]\Delta_l^{-1}\mathbb{P}_L(l)~ \leq ~C\sum_{l\in\mathbb{N}_{l_*}}\Delta_l^{-1}\mathbb{P}_L(l),
$$
where it is assumed that the cost to sample from the kernel $K_l$ is bounded by $C\Delta_l^{-1}$ and that, by A\ref{ass:2}, one can upper-bound the expected stopping time $\mathbb{E}\left[\check{\tau}_{l,l-1}\right]$. If one took $\mathbb{P}_L(l)\propto\Delta_l^{\omega}$, then if $\omega\in (1, 2\beta)$ both the variance and expected cost are finite. From standard results on Euler discretization of diffusions one might expect that $\beta_1 = 1$ as confirmed in \autoref{fig:EM_weak_err_DW}, where we estimate the  {weak} error rate for Euler--Maruyama method considering the models in \autoref{sec:numerics}. We also plot the second moment of the increments $\xi_l$ in \autoref{fig:inc_var}. If we assume that $\mathbb{E}[\xi_l^2]=\mathcal{O}(\Delta_l^{2\beta})$, then \autoref{fig:inc_var} gives an estimate of $\beta$ for each model, which is approximately $1$, and as a result, $\beta_2 > 2$.

\begin{figure}[h!]
\centering
\includegraphics[scale=0.16]{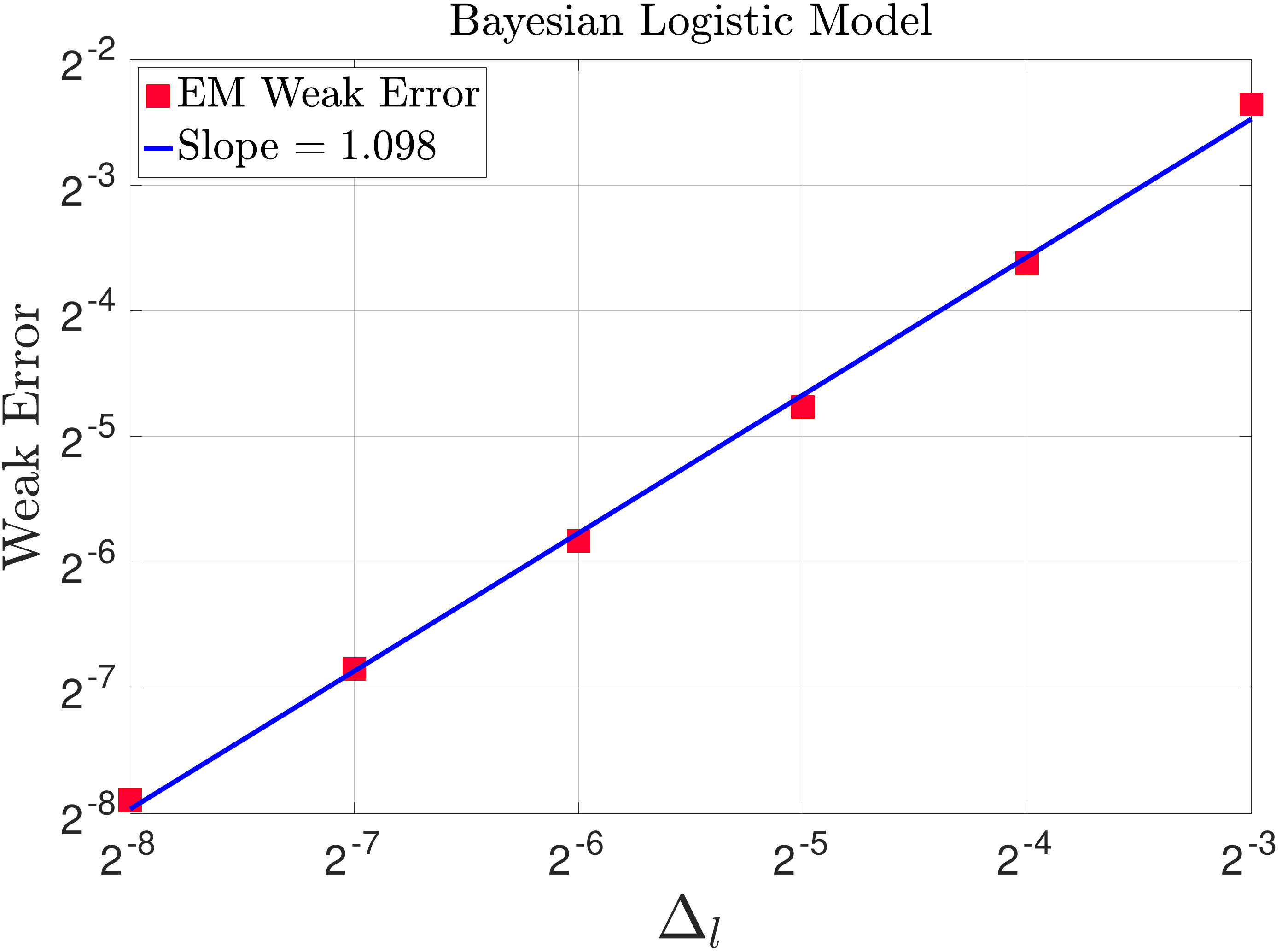} \vspace{10pt}
\includegraphics[scale=0.16]{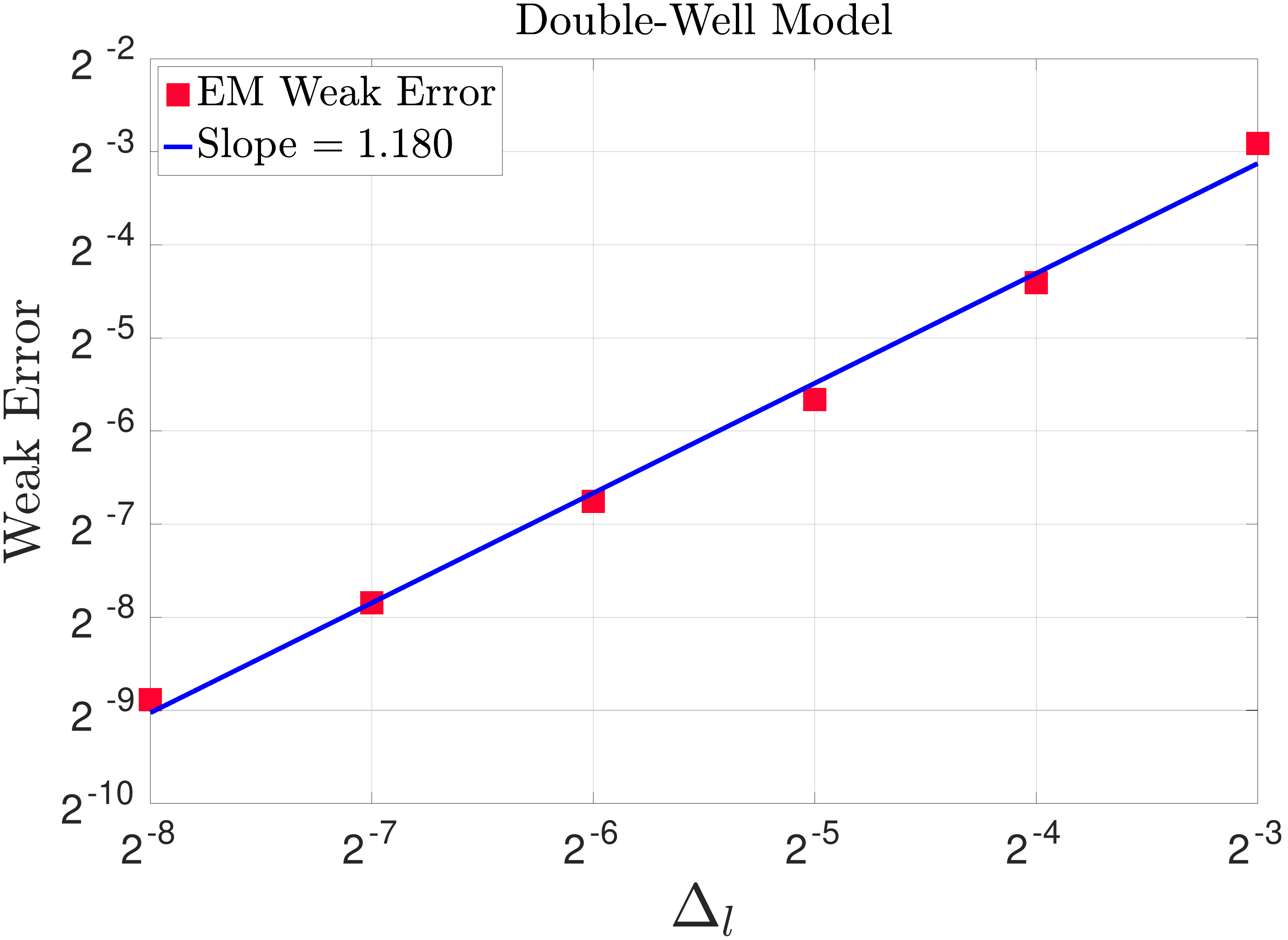}
\includegraphics[scale=0.16]{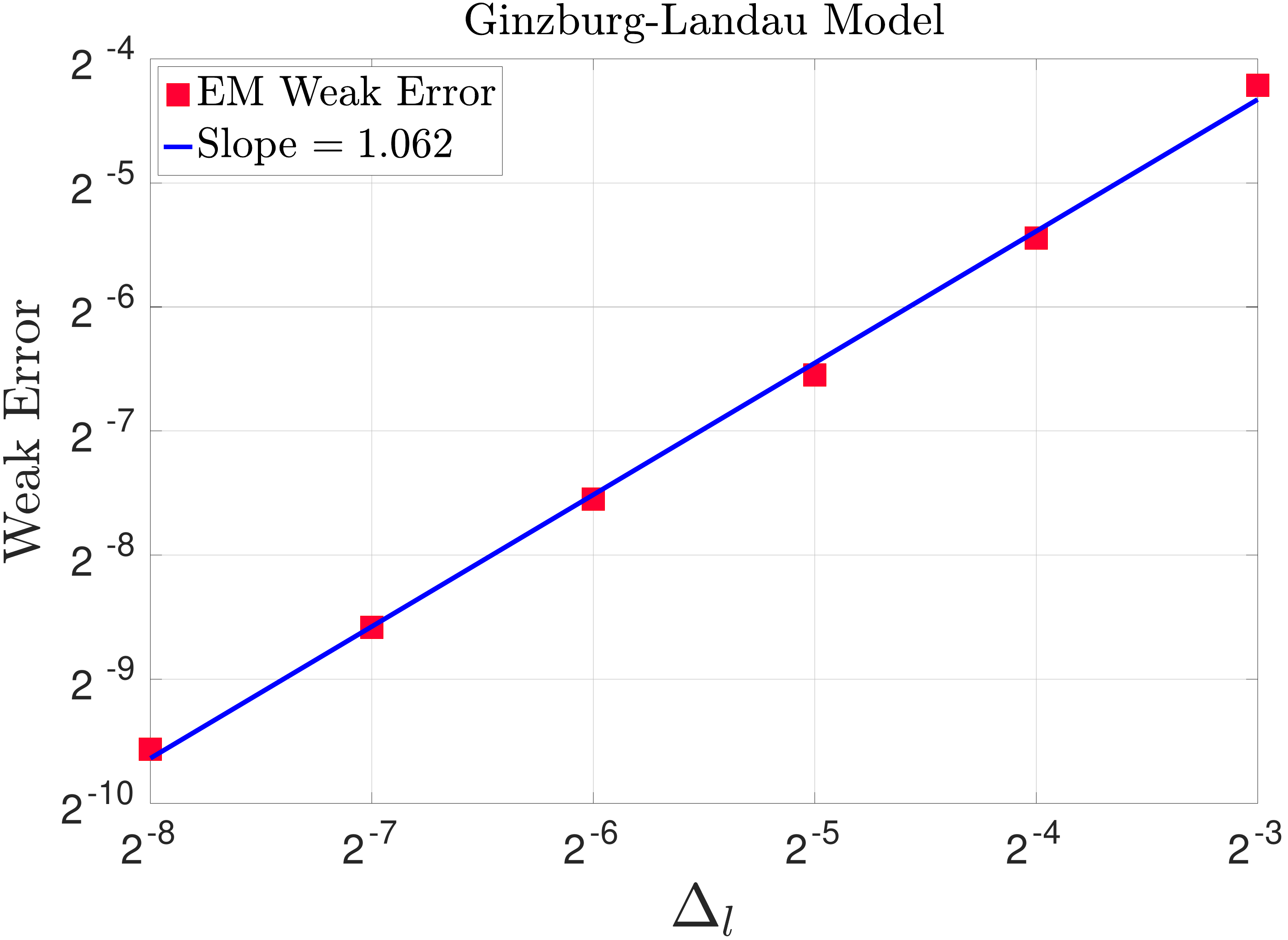}
\captionof{figure}{Euler--Maruyama (E--M) weak error $ |[\pi_l - \pi](\varphi)|$ against the time step-size $\Delta_l$ for the models in \autoref{sec:numerics}, with $l$ the discretization level, $\varphi(u) = u_1$ (the first co-ordinate), $\pi(\varphi)$ the reference expectation which we take to be the average of $5200$ samples of Euler-discretized $\pi_L(\varphi)$ at level $L=18$, a very high level of discretization. These plots verify that the weak error rate $\beta_1$ of E--M is approximately 1 as expected.}
\label{fig:EM_weak_err_DW}
\end{figure}

\begin{figure}[!h]
\centering
\includegraphics[scale=0.16]{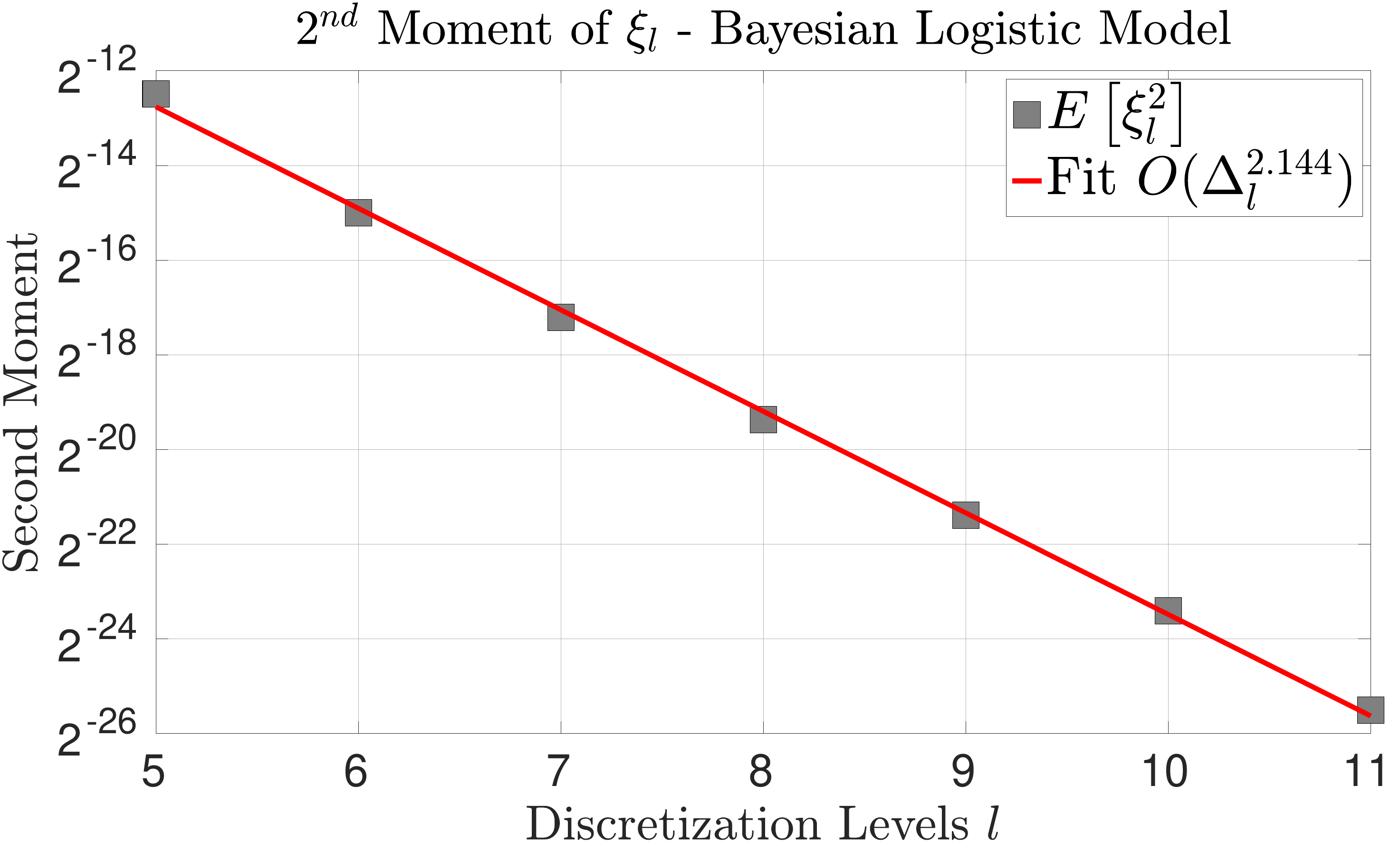}\vspace{10pt}
\includegraphics[scale=0.16]{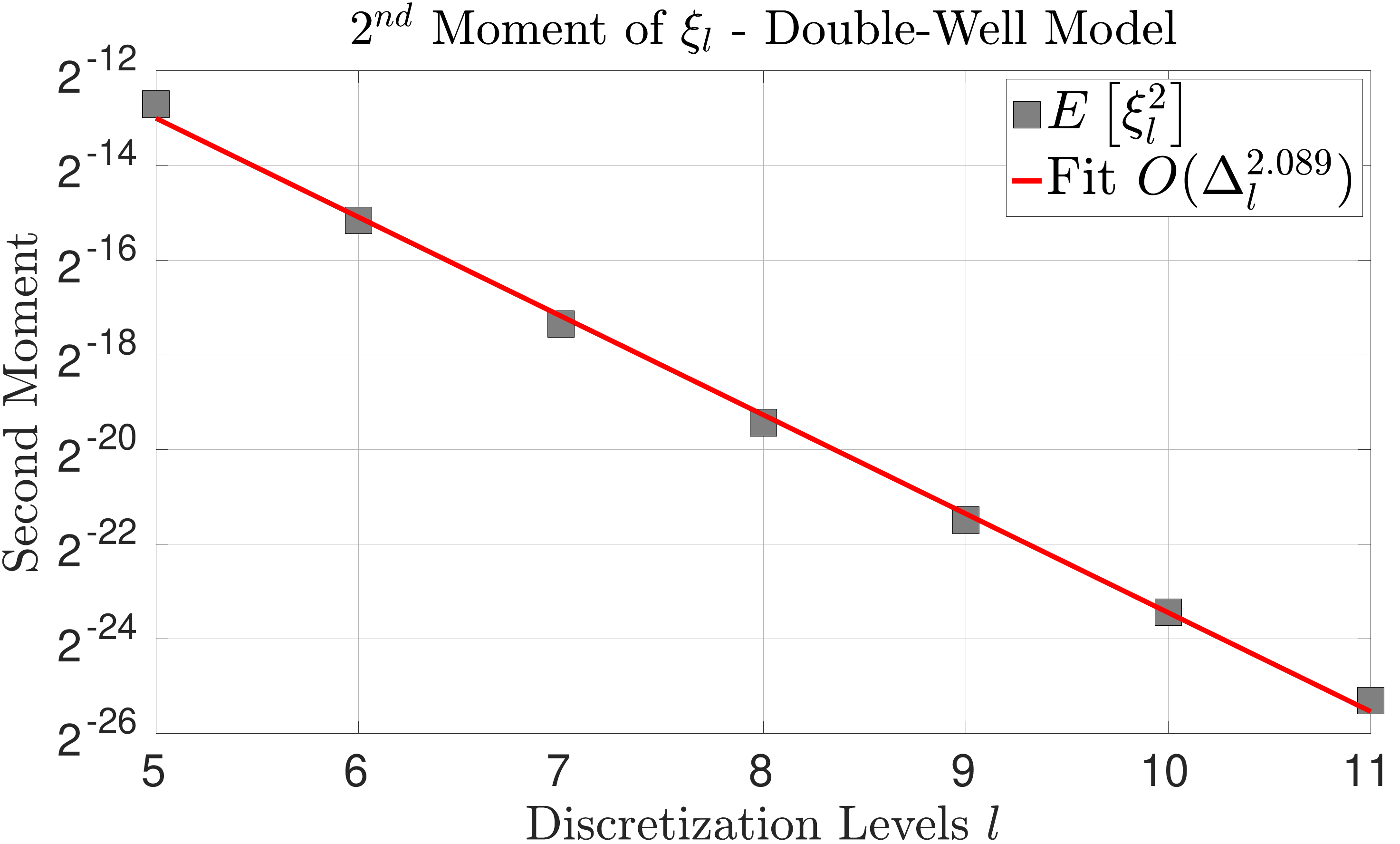}
\includegraphics[scale=0.16]{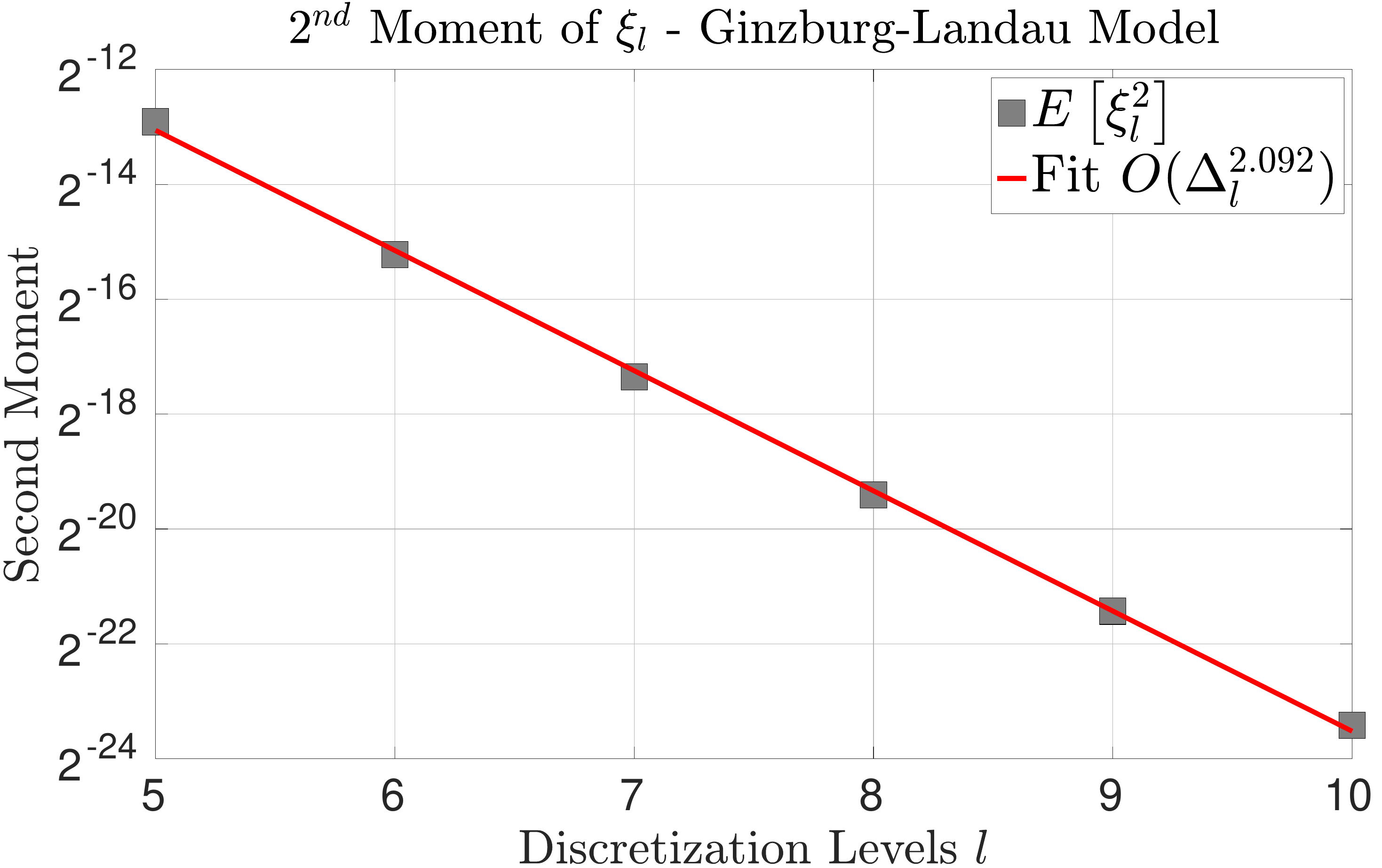}
\captionof{figure}{Second moment of the time-averaged estimator increments $\xi_l = \widehat{[\pi_l-\pi_{l-1}](\varphi)}_{S,T,k,m}$ against discretization level $l$ for the different models in \autoref{sec:numerics}. The function considered here is $\varphi(u) = u_1$, $u\in \mathbb{R}^{2d}$. These plots provide a numerical estimate of the parameter $\beta$ in \eqref{eq:bound_cost} of value 1.}
\label{fig:inc_var}
\end{figure}

\section{Numerical Results}\label{sec:numerics}
\label{sec:num}
In this section we seek to illustrate and verify our theoretical results using three numerical examples arising in statistics and physics. 
Our experiments will be based on demonstrating both finite variance and unbiasedness of the proposed estimator. We will test the methodology on Bayesian logistic regression problem, a double-well potential and Ginzburg-Landau model. Finally, we will compare our methodology to the unbiased Schrodinger--F\"{o}llmer sampler for one particular example.

In all examples below we choose $\kappa$ and $\sigma$ such that $2\kappa = \sigma^2$ and set $b$ in \eqref{eq:disc_dyn2} to $b(x) = -\nabla U(x)$, where $U = -\log \pi$ and $\pi$ is the density of interest that we aim to sample from. The objective is to estimate, unbiasedly, the expectation $\pi(\varphi)$. We find that the algorithm works quite well when $\sigma$ is assigned a large value, but it may collapse for small values. In our simulations, we set $\sigma=3$ and take $\varphi(x) = x$. 

\subsection{Bayesian Logistic Regression}
\label{subsec:BLR}
We consider the binary logistic regression model in which the binary observations $\{Y_i\}_{i=1}^n$ are conditionally independent Bernoulli random variables such that $Y_i \in \{0,1\}$ and
$$
\mathbb{P}(Y_i=1|X_i = x_i, \alpha) = \rho(\alpha^T x_i),
$$
where $\rho: \mathbb{R}\to (0,1)$ defined by $\rho(w) = e^w/(1+e^w)$ is the logistic function and $X_i, ~\alpha \in \mathbb{R}^d$ are the covariates and the unknown regression coefficients, respectively. The prior density for the parameter $\alpha$ is a multivariate normal Gaussian given by
$$
pr(\alpha) = \phi(\alpha;0,\Sigma_\alpha),
$$
where $\Sigma_\alpha$ is defined through its inverse $\Sigma_\alpha^{-1} = \frac{1}{n} (\sum_{i=1}^n X_i X_i^T)$. The covariate vectors $\{X_i\}_{i=1}^n$ are sampled independently from $\mathcal{U}\{-1,1\}^d$ which then are standardized. The density of the posterior distribution of $\alpha$ is then given by
$$
\pi\left(\alpha~|~\{X_i=x_i,Y_i=y_i\}_{i=1}^n \right) \propto \exp\left( - U(\alpha) \right),
$$
where 
$$
U(\alpha) = -\sum_{i=1}^n \left[y_i \alpha^T x_i-\log(1+\exp(\alpha^Tx_i)) \right] +\frac{1}{2} \alpha^T \Sigma_\alpha^{-1} \alpha,
$$
and 
$$
\nabla U(\alpha) = - \sum_{i=1}^n y_i x_i + \sum_{i=1}^n \frac{\exp(\alpha^Tx_i)x_i}{1+\exp(\alpha^Tx_i)} + \alpha^T \Sigma_\alpha^{-1}. 
$$
We take $d=5$ and $n=100$. The reference expectation is the mean of $2.6\times 10^8$ samples computed through the unbiased MCMC method proposed in \cite{jacob1} with a random-walk Metropolis-Hasting kernel.

\subsection{Double-Well Model}
\label{subsec:DW}
For our second example we consider sampling from $\pi(x) = \exp{(-U(x))}$, where $U$ is a double-well potential given by
\begin{equation}
\label{eq:dwm}
U(x) = \frac{1}{4} \|x\|_2^4 - \frac{1}{2}\|x\|_2^2, \qquad x\in \mathbb{R}^d.
\end{equation}
The double-well potential is one of several quartic potentials of substantial importance in quantum mechanics and quantum field theory \cite{double} for the investigation of different physical phenomena or mathematical features. The gradient of the potential is given by 
$$
\nabla U(x) = (\|x\|_2^2 -  1)~x.
$$
In our simulations, we take $d=100$. The true reference expectation is $\pi(\varphi) = 0$.

\subsection{Ginzburg-Landau Model}
\label{subsec:GL}
In the final example we consider the Ginzburg-Landau (GL) model \cite{GL,HK15, GR19}, a model that describes a thermodynamic system that undergoes continuous phase transitions at a temperature $T = T_c$, from a high-temperature symmetric phase to a low-temperature ordered phase in which some symmetry is broken. We denote by the random variable $\psi\in \mathbb{R}$ the order parameter, which is assumed to be spatially dependent, i.e., $\psi=\psi(x)$, $x\in \Omega \subseteq\mathbb{R}^3$, and $\nabla_x \psi \cdot n=0$, where $n$ is a unit vector normal to the boundary $\partial \Omega$. In the absence of external fields, the probability of a fluctuation $\psi(x)$ is given by
$$
\pi(\psi) \propto \exp\left\{-U(\psi(x)) \right\},
$$
with $U$ is the GL free energy functional defined via
$$
U(\psi(x)) = \int_{\Omega} \left[\frac{1-\overline{T}}{2} \psi^2(x)+\frac{\gamma \overline{T}}{2}\|\nabla_x \psi(x)\|^2 + \frac{\zeta\overline{T}}{4} \psi^4(x)\right] dx,
$$

$\overline{T} = T_c/T$ and $\gamma,\zeta>0$. We now consider the discretized GL model on a 3d-lattice with $d = d_0^3$ sites, $(x_{i,j,k})_{i,j,k=1}^{d_0}$, where $d_0\in \mathbb{N}$, with spacing equal to one. Periodic boundary conditions are applied to the system where we have $x_{11,j,k} = x_{1,j,k}$ and $x_{0,j,k}=x_{10,j,k}$ and similar situation for the second and the third coordinates. Each site represents a random variable $\psi_{i,j,k} \in \mathbb{R}$. Set $\bm{\psi}:=(\psi_{i,j,k})_{i,j,k=1}^{d_0} \in \mathbb{R}^{d}$, then an approximation to $\pi$ is given by 
$$
\widetilde{\pi}(\bm{\psi}) \propto \exp\left\{-\widetilde{U}(\bm{\psi}) \right\},
$$
with
\begin{align*}
\widetilde{U}(\bm{\psi}) &= \sum_{i,j,k=1}^{d_0} \Big[\frac{1-\overline{T}}{2} \psi_{i,j,k}^2+\frac{\gamma\overline{T}}{2}\left((\psi_{i+1,j,k} -\psi_{i,j,k} )^2+(\psi_{i,j+1,k} -\psi_{i,j,k})^2+(\psi_{i,j,k+1} -\psi_{i,j,k} )^2\right) + \frac{\zeta\overline{T}}{4} \psi_{i,j,k}^4\Big],
\end{align*}
such that a forward finite difference is used to approximate $\nabla_x \psi$. The functional derivative $\delta U(\psi(x'))/\delta \psi(x)$ \\ is given by
\begin{align}
\frac{\delta U(\psi(x'))}{\delta \psi(x)} &= \int_{\Omega} \Big[ (1-\overline{T}) \psi(x')  \delta(x-x') + \gamma\overline{T} \nabla_{x'} \psi(x') \cdot \nabla_{x'} \delta(x-x') + \zeta\overline{T} \psi^3(x') \delta(x-x') \Big] dx' \nonumber\\
&= (1-\overline{T}) \psi(x) - \gamma\overline{T} \Delta_x \psi(x) + \zeta\overline{T} \psi^3(x), \label{eq:func_der}
\end{align}
where we used the following identities 
\begin{align*}
\frac{\delta}{\delta \psi(x)} \psi(x') &= \delta(x-x'),\\
\int_{\Omega} \nabla_{x'} \psi(x') \cdot \nabla_{x'} \delta(x-x') dx'& = - \int_{\Omega} \Delta_{x'} \psi(x') \delta(x-x') dx' = -\Delta_x \psi(x).
\end{align*}
Here $\delta(x)$ denotes the Dirac delta function. In the last identity we used integration by parts with the fact that $\nabla_x \psi \cdot n=0$ for any normal vector $n$ on the boundary. On the lattice, an approximation to the functional derivative in \eqref{eq:func_der}, which we denote by $\widetilde{\nabla_{\bm{\psi}} U}$, is given by
\begin{align*}
\widetilde{\nabla_{\bm{\psi}} U} &= \Big((1-\overline{T})\psi_{i,j,k} + \gamma\overline{T} \vartheta_{i,j,k} +\zeta\overline{T} \psi_{i,j,k}^3\Big)_{i,j,k=1}^{d_0},
\end{align*}
and
\begin{equation*}
\vartheta_{i,j,k} = \big\{6\psi_{i,j,k}-(\psi_{i+1,j,k}+\psi_{i-1,j,k}+\psi_{i,j+1,k}+\psi_{i,j-1,k}+\psi_{i,j,k+1}+\psi_{i,j,k-1})\big\},
\end{equation*}
where a second-order central finite difference is used to approximate $\Delta_x \psi$. Then $b-\widetilde{\nabla_{\bm{\psi}} U}$ in \eqref{eq:disc_dyn2}. 
The true reference expectation is $\pi(\varphi) = 0\in \mathbb{R}^d$, where $d=10^3$. We also set $\overline{T} =2$, $\gamma = 0.1$ and $\zeta = 0.5$.

\subsection{Simulation Settings}
\label{subsec:sim_sett}
The time-averaged, unbiased estimator
\begin{align}
\label{eq:avg_ub_est}
\left(\widehat{\pi}(\varphi)_{S,T,k,m}\right)_{\text{avg}} := \frac{1}{M}\sum_{i=1}^M \left(\widehat{\pi}(\varphi)_{S,T,k,m}\right)^{(i)},
\end{align}
where $\left(\widehat{\pi}(\varphi)_{S,T,k,m}\right)^{(i)}$ is the estimator presented in \eqref{eq:basic_ub_est_time} in \autoref{alg:final} and the cost of the single-level, time-averaged estimator
\begin{align}
\label{eq:avg_single_est}
\left(\widehat{\pi_L}(\varphi)_{T,k,m}\right)_{\text{avg}} := \frac{1}{M}\sum_{i=1}^M \left(\widehat{\pi_L}(\varphi)_{T,k,m}\right)^{(i)},
\end{align}
where $\left(\widehat{\pi_L}(\varphi)_{T,k,m}\right)^{(i)}$ is presented in \eqref{eq:time_ave} and $L$ is the discretization level. We will compare the cost of both estimators versus the mean-squared errors (MSEs) that are obtained by running 50 independent simulations of each estimator and are given by
$$
\text{MSE}_{ub} = \frac{1}{50} \sum_{j=1}^{50} \left[ \left(\widehat{\pi}(\varphi)_{S,T,k,m}\right)_{\text{avg}}^{(j)} - \pi(\varphi) \right]^2, \quad \text{and}\quad \text{MSE}_{s} = \frac{1}{50} \sum_{i=1}^{50} \left[ \left(\widehat{\pi_L}(\varphi)_{T,k,m}\right)_{\text{avg}}^{(j)} - \pi(\varphi) \right]^2,
$$
where $\pi(\varphi)$ is the reference expectation. Let $\tau$ be the stopping time, we set $k=100$ and $m=\min\{2k,\tau-1\}$, however, if $\tau < k-1$, we set $k = 0.5 ~\tau$.

For a given $\epsilon >0$, the goal is to obtain an MSE of order $\epsilon^2$. Therefore, for the single-level estimator, we set $L=\mathcal{O}(-\log_2(\epsilon))$ and $M=\mathcal{O}(\epsilon^{-2})$. For the unbiased estimator, in practice, one has to truncate the values of $L$. As a result, we set $\mathbb{P}_L(l)\propto 2^{-1.5l} \mathbb{I}_{\{l_*,\cdots,l_{\text{max}}\}}(l)$, where $l_*,l_{\text{max}}\in \mathbb{N}_0$ and $l_*<l_{\text{max}}$. With this choice, the probability of simulating \eqref{eq:disc_dyn2}-\eqref{eq:disc_dyn3} at a high discretization level is very small. In all examples, we take $l_* = 5$, $l_{\text{max}}=12$ and $M=\mathcal{O}(\epsilon^{-2})$ as in the single-level estimator. The cost to compute the single-level estimator is $\mathcal{C}_{\textrm{s}} := 2^{L+1} \sum_{j=1}^M \tau_L^{(j)}=2^{L+1}~M~\max_{j}\{\tau_L^{(j)}\}\leq \mathsf{C} 2^L~M = \mathcal{O}(\epsilon^{-3})$ 
where $\tau_L^{(j)}$ is the stopping time for the $j$-th replicate that is assumed to be bounded by A\ref{ass:2}. The cost of the unbiased estimator is $\mathcal{C}_{\textrm{ub}}:=\sum_{j=1}^M \mathcal{C}^{(j)}$ where 
\begin{align*}
\mathcal{C}^{(j)} = \left\{\begin{array}{lr}
\tau_{L_j} ~ 2^{L_j+1} & \text{ if } L_j = l_*,\\
\check{\tau}_{L_j,L_{j-1}}~ (2^{L_j+1}+2^{L_j}) & \text{ if } L_j > l_*,\\
\end{array}\right. \qquad L_j \sim \mathbb{P}_L.
\end{align*}

\begin{rem}
\label{rem:new}
We note that even though the rate for the single-level estimator should be $\mathcal{O}(\epsilon^{-3})$, what we observe numerically
is different. As we will discuss in the next subsection, the rates obtained are worse, where it seems that we attain a rate of $\mathcal{O}(\epsilon^{-(3+\delta)})$,
for $\delta>0$. This is for each individual numerical example and we expect that this is associated to the drift coefficient in the dynamics, which are in-turn affected by the target.
The regularity of the drift will influence the practical rate of convergence of the discretization scheme and the time convergence of the dynamics.
\end{rem}

By \autoref{rem:1}, the expected cost of each replicate is bounded by a constant $\mathsf{C}$ and hence the overall expected cost is $\overline{\mathcal{C}_{\textrm{ub}}} \leq \mathsf{C} M = \mathcal{O}(\epsilon^{-2})$. Notice that in fact the cost of sampling from $\check{Q}_L$ is $2^L$ and the cost of sampling from $\check{P}_L$ is $2^{L}-1$ plus the cost of sampling from a maximal coupling, which we take it to be one, and therefore the cost of sampling from $\check{K}_L=\alpha\check{Q}_L + (1-\alpha)\check{P}_L$ is $2^{L+1}$. 

We will also compare the unbiased SFS algorithm \cite{sfs} with our unbiased ULD sampler \autoref{alg:final}. For a fair comparison, we consider comparing the average machine time (in seconds) that one independent realization of the 50 realizations takes on a workstation of 52 cores. The unbiased SFS estimator is based on the SF diffusion process, 
$$
dX_t = b(X_t,t)dt + dW_t, \quad X_0=0, ~t\in[0,1],
$$
where the drift term has the specific form
$$
b(x,t) = \nabla \log \mathbb{E}[f(x + W_{1-t})] = \frac{\mathbb{E}_{\phi}[\nabla f(x+\sqrt{1-t}Z)]}{\mathbb{E}_{\phi}[ f(x+\sqrt{1-t}Z)]} = \frac{1}{\sqrt{1-t}}\frac{\mathbb{E}_{\phi}[Z f(x+\sqrt{1-t}Z)]}{\mathbb{E}_{\phi}[ f(x+\sqrt{1-t}Z)]},
$$ 
and $f$ corresponds to the analogous of a likelihood function for a standard Gaussian prior, i.e. for $z\in\mathbb{R}^d$, 
$
f(z) = \pi(z)/\phi(z),
$
with $\phi(z)$ the standard $d-$dimensional Gaussian density.  For $l\in\{0,1,\dots\}$ and $k\in\{0,1,\dots,\Delta_l^{-1}-1\}$, we discrete the above SDE as
$$
\widetilde{X}_{(k+1)\Delta_l}^{l,N} = \widetilde{X}_{k\Delta_l}^{l,N} + \hat{b}(\widetilde{X}_{k\Delta_l}^{l,N},k\Delta_l)\Delta_l + W_{(k+1)\Delta_l} - W_{k\Delta_l},
$$ 
where, for $x\in\mathbb{R}^d$
$$
\hat{b}(x,k\Delta_l) = \frac{\frac{1}{N}\sum_{i=1}^N \nabla f(x+\sqrt{1-k\Delta_l}Z^i)}{\frac{1}{N}\sum_{i=1}^N f(x+\sqrt{1-k\Delta_l}Z^i)},
$$
and, for $i\in\{1,\dots,N\}$,  $Z^i\stackrel{\textrm{i.i.d.}}{\sim}\mathcal{N}_d(0,I)$. The idea of the unbiased SFS is based upon double-randomization techniques (e.g. \cite{ub_bip,ub_pf}) where both $N$ and $l$ are chosen at random in a specific manner.

\subsection{Results and Discussion}
\label{subsec:discuss_num}

In \autoref{fig:mse_cost} (a)--(c), we plot the cost of the estimators in \eqref{eq:avg_ub_est}-\eqref{eq:avg_single_est} against their MSE for the models above. The MSE of the proposed unbiased estimator presented in \autoref{alg:final} decays at the optimal rate of $1/\mathcal{C}_{\text{ub}}=\mathcal{O}(\epsilon^2)$ as shown in the plots, which is as expected, and outperforms that of the single-level estimator given in \eqref{eq:avg_single_est}.
As eluded to, from \autoref{rem:new} the actual rate which we obtain for each model is around  $\mathcal{O}(\epsilon^{-(3+\delta)})$, for $\delta>0$, which differs for each model. This is especially the case for the GL model which, compared to the other models, has a more complicated drift term. Therefore we believe this constitutes to a  modified rate for the single-level estimator. Further investigation is needed to verify this, but this is beyond the scope of this work and we leave it for future work. Nonetheless, as stated, the unbiased estimator outperforms the single estimator for each model example.

In \autoref{fig:mse_cost} (d), the cost of our unbiased estimator and that of the SFS, measured in seconds, is plotted against the MSE of each method. As we can see, the new methodology beats the unbiased SFS and in fact the MSE decays faster than expected by the theory. This suggesting to attain similar values of MSE, our unbiased estimator takes considerably less time than that using the SFS, as presented in \cite{sfs}.

\begin{figure}[!h]
\centering
\subfloat[]{
\includegraphics[scale=0.18]{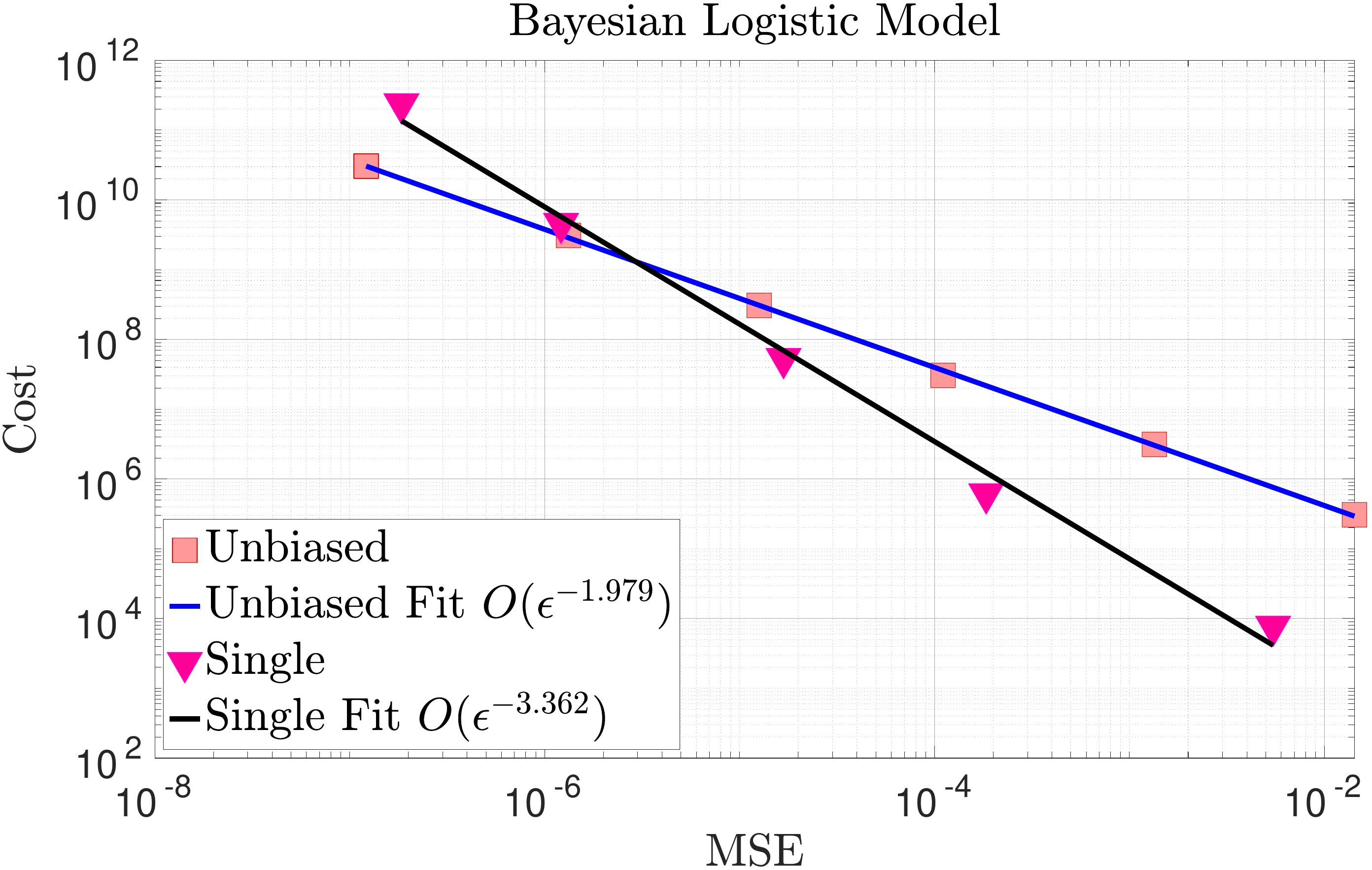}}
\subfloat[]{
\includegraphics[scale=0.18]{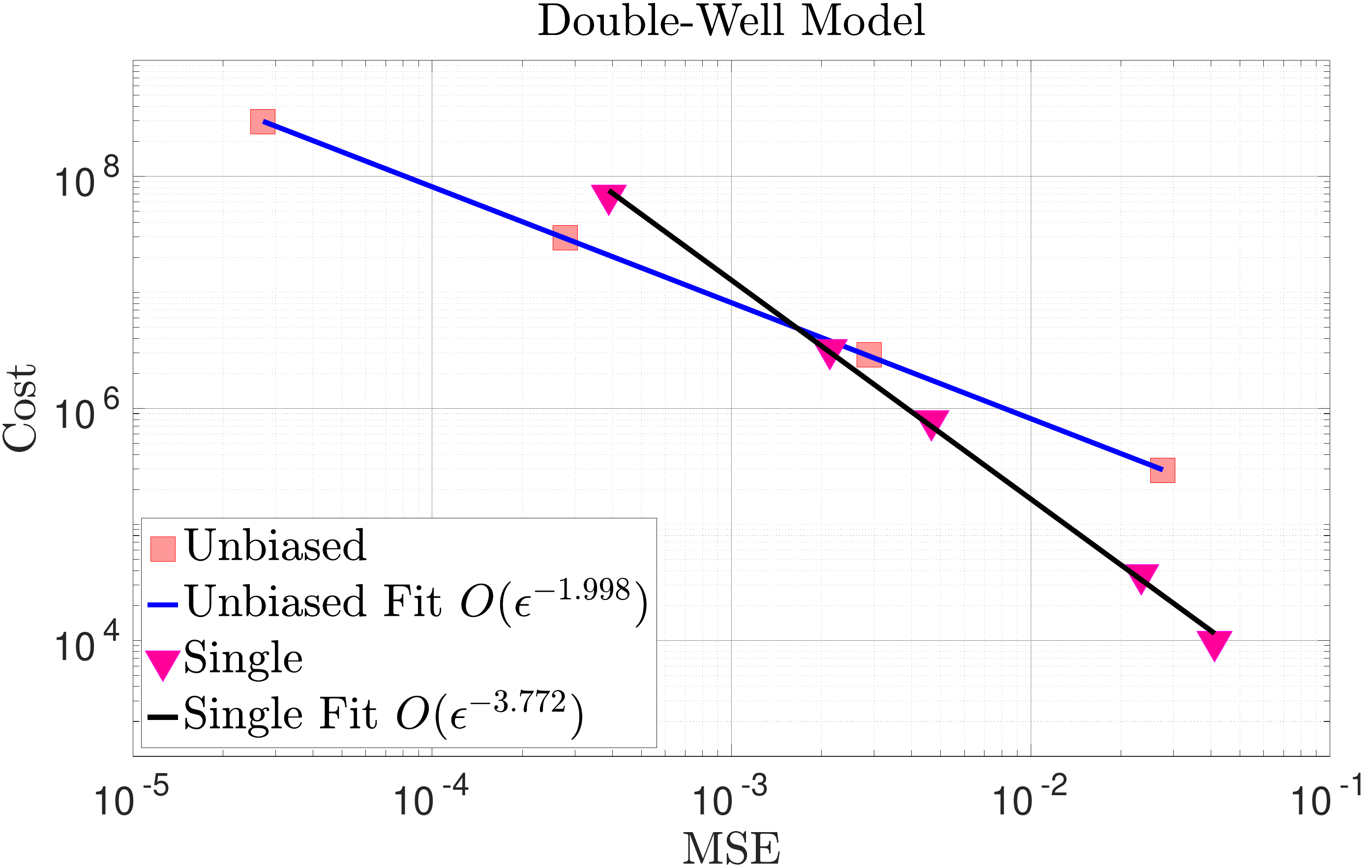}}\\
\subfloat[]{
\includegraphics[scale=0.18]{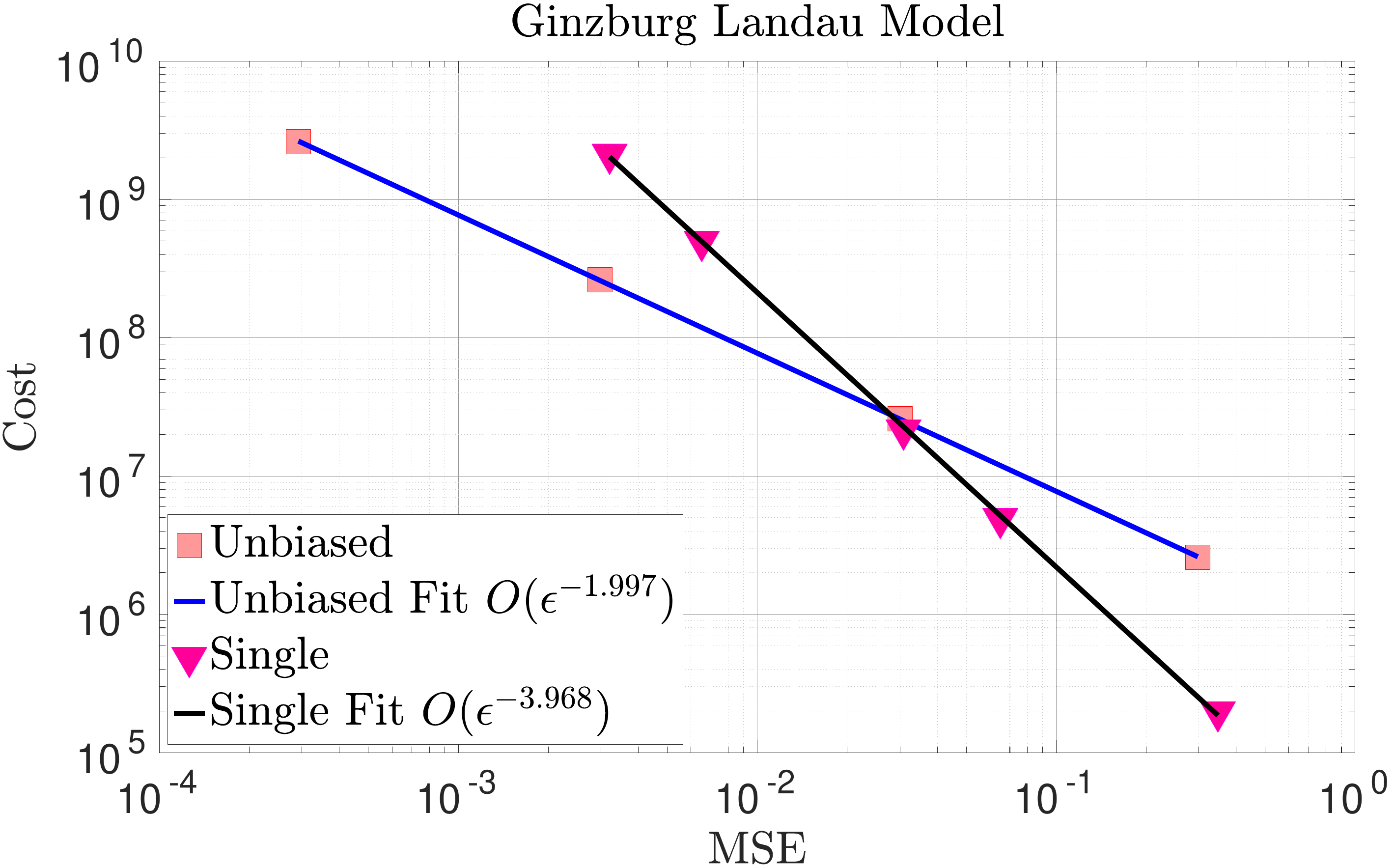}}
\subfloat[]{
\includegraphics[scale=0.18]{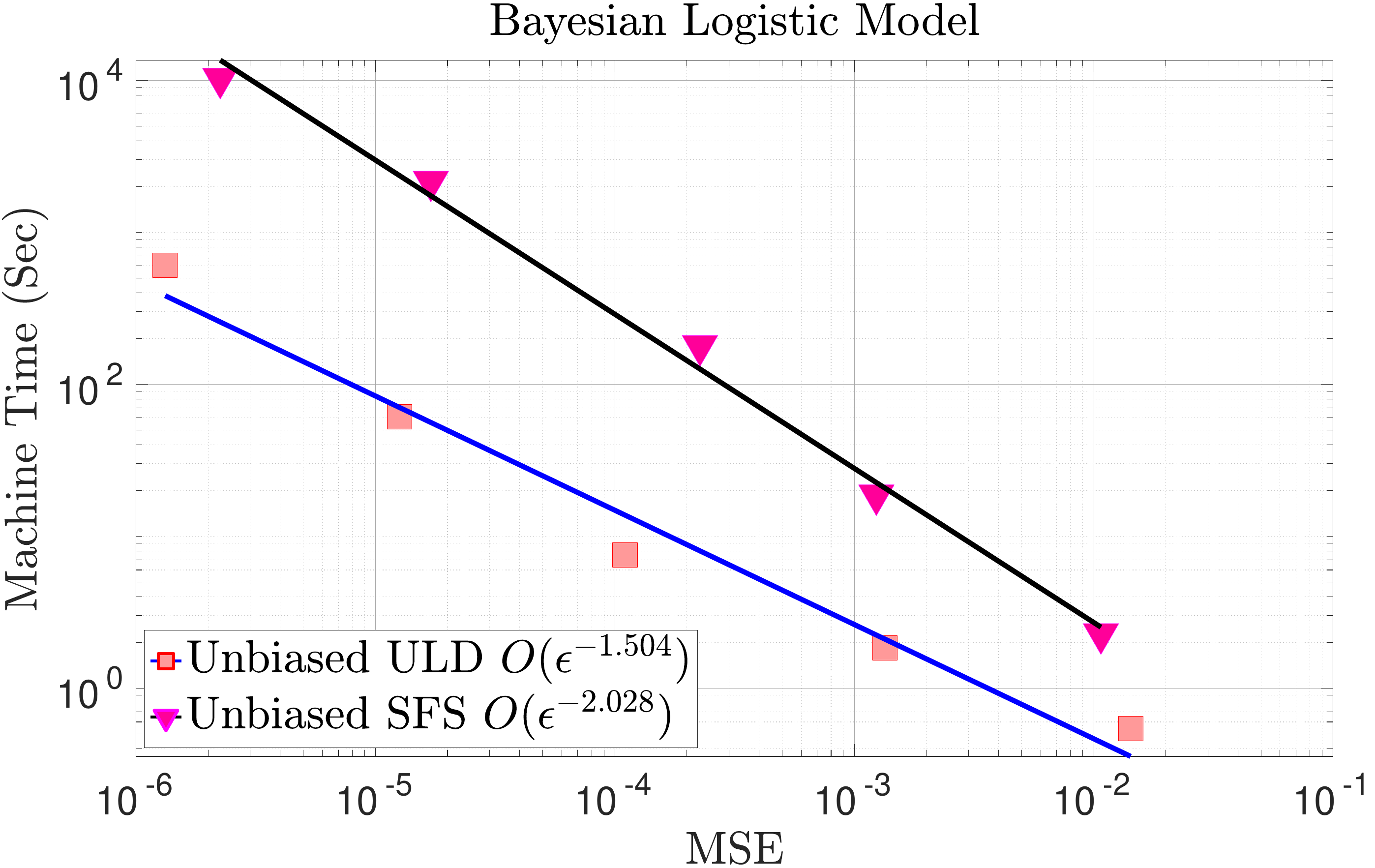}}
\caption{In (a)--(c), we plot the cost versus MSE for the single-level, time-averaged estimator in \eqref{eq:time_ave} and the time-averaged unbiased estimator presented in \eqref{eq:basic_ub_est_time} in \autoref{alg:final}. In (d), we plot the machine time versus MSE for the unbiased estimator in \autoref{alg:final} and the unbiased SFS estimator in \cite{sfs} for the Bayesian logistic model.}
\label{fig:mse_cost}
\end{figure}

\subsection{Comparison to Unbiased MALA}
 {
Our final numerical experiment we present in this work is a comparison of our method to unbiased methodologies proposed by Jacob et al. \cite{jacob1}. In particular we will compare it
to the unbiased MALA, which is a well-known Metropolis Hastings (MH) method based on the following Langevin dynamics
\begin{align*}
dX_t &= - \frac{1}{2}\nabla \log \pi(X_t)dt + \sqrt{2 \delta^{-1}}dB_t,
\end{align*}
with discretization
\begin{align*}
X_{(k+1)\Delta_l} & =  X_{k\Delta_l} -\frac{\Delta_l}{2} \nabla \log \pi(X_{k\Delta_l}) + \sqrt{2 \Delta_l \delta^{-1}}\left(B_{(k+1)\Delta_l}-B_{k\Delta_l}\right),
\end{align*}
where where $\{B_t\}_{t\geq 0}$ is a standard $d-$dimensional Brownian motion and $\delta>0$ denotes the inverse temperature. The acceptance probability associated with it arises from the usual MH-type algorithms. Further details on MALA can be found in the following references \cite{greg,robert,RS03}. Our numerical example will consist
of a comparison between our unbiased scheme and U-MALA, which is tested on the double-well model \eqref{eq:dwm}. The U-MALA was first discussed in the work of Heng et al. \cite{ub_hmc} as a simplified version of the HMC coupling. As a result the coupling associated with U-MALA is much simpler than the U-HMC couplings, which follow from \cite{jacob1}, which also exploit synchronous maximal couplings. Our choice of using this model, is that the density of interest is bimodal, 
which should constitute to a difference in performance over the toy logistic regression model. The dimension is chosen as $d=75$ and again we run 50 independent simulations to compute the MSE. The other parameter choices are consistent with that discussed in Section \ref{subsec:sim_sett}. We also set $\delta=1$ for our experiments. 
\\
We present our simulations in Figure  \ref{fig:new}. As we can observe for both subplots the MSE-to-cost ratio is better, related to both CPU and theoretical cost, for \autoref{alg:final}. The ratio of the rates between each unbiased estimators is consistent in both subplots. As mentioned we believe the reasons for the under-performance of the unbiased MALA proposed by \cite{ub_hmc,jacob1} are due to bimodality of $U(x)$ and the fact
the unbiasedness is only related to the bias coming from the MCMC, not the discretization error, whereas our algorithm aims at removing both the MCMC and discretization bias.}

\begin{figure}[h!]
 \centering
  \includegraphics[width=8.7cm]{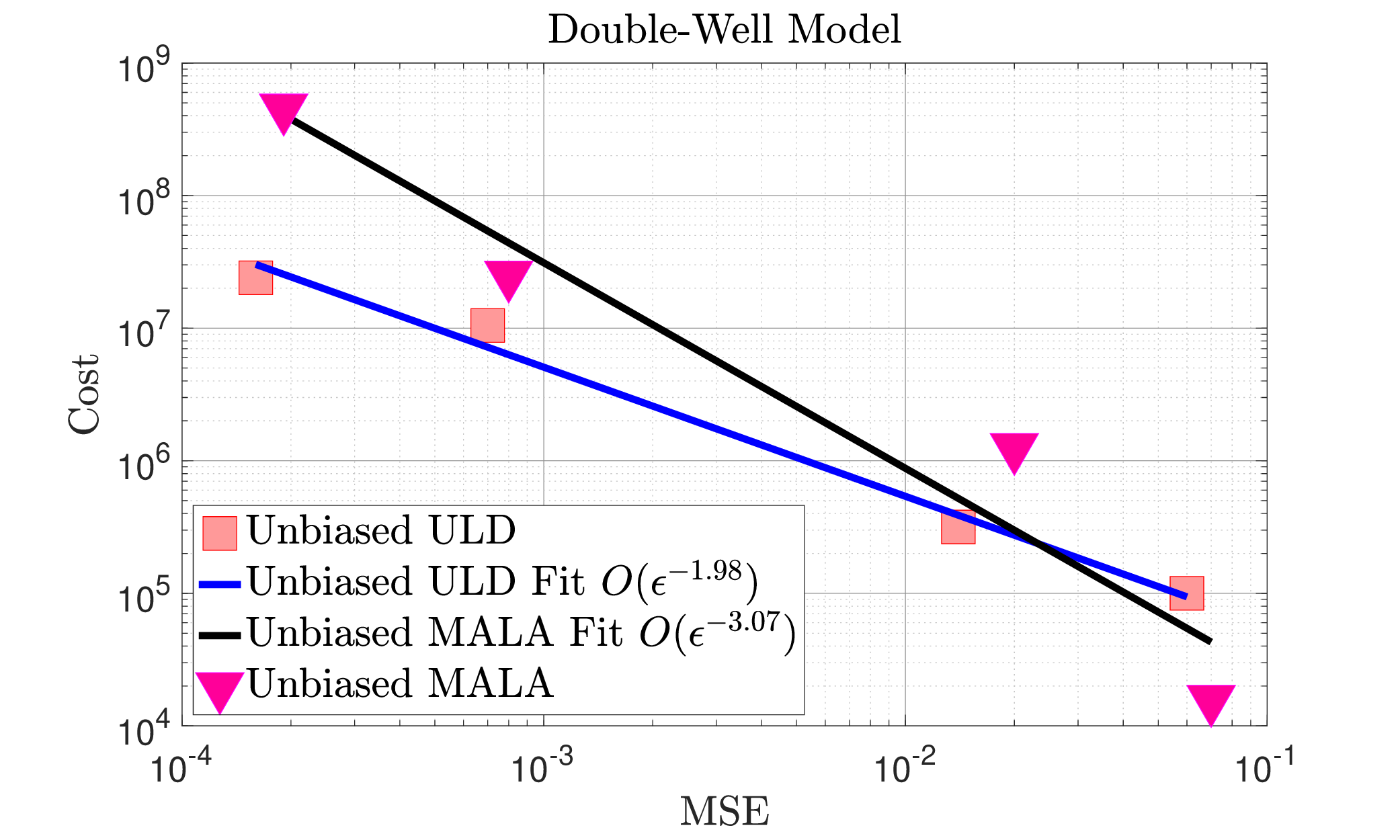}
    \includegraphics[width=8.7cm]{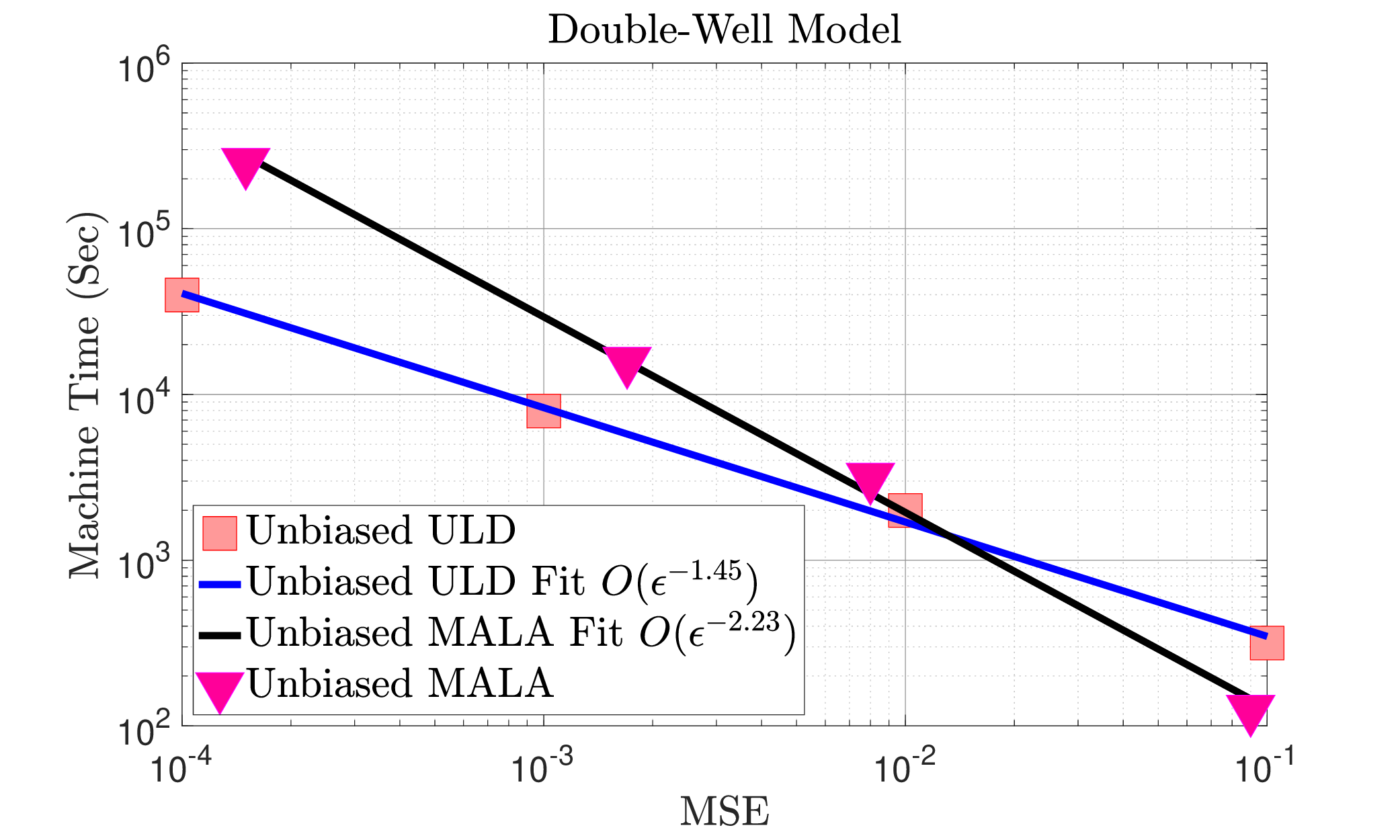}
    \caption{ {This plot compares the time-average unbiased estimator \eqref{eq:basic_ub_est_time} in \autoref{alg:final} to the unbiased MALA. Left: Theoretical cost vs MSE. Right: CPU time cost vs MSE.}}
        \label{fig:new}
\end{figure}

\section{Conclusion}
\label{sec:conc}
Our motivation from this work was to provide a new unbiased estimator, which is based on the discretized underdamped Langevin dynamics (ULD) \eqref{eq:disc_dyn1}-\eqref{eq:disc_dyn2}.
The ULD has sparked recent interest in both the statistics and machine learning community, and as a result we wanted to see if such a dynamics could be used in the context of unbiased estimation.
We introduced a new methodology based on ULD, which was based on the double-randomization schemes used for unbiased estimation. Subsequently we proved that our new estimator is unbiased
 with finite variance under suitable assumptions. To verify our theory, we implemented our methodology on a range of interesting model problems, such as the stochastic Ginzburg-Landau model, the double-well model
and a Bayesian logistic regression problem. We also justified such an
 estimator by comparing it to our known estimators, developed in a similar manner, such as the unbiased Schr\"{o}dinger--F\"{o}llmer sampler presented in \cite{sfs}  {and the unbiased Metropolis adjusted Langevin algorithm presented in \cite{ub_hmc,jacob1}.} 
 
 From this work, there are a number of research directions one can consider. A rather obvious one would be to use the current methodology aimed at unbiased estimation for the both the score function and the Hessian \cite{chada,ub_grad}. This has already been considered previously, but in the context of our methodology which may prove to be more useful. Another direction is to extend the ergodicity result presented in \cite{disc_lange}, where one does not have the requirement that $l_*$ has to be large enough to ensure $\mathcal{V}$-uniform ergodicity. This may prove to be challenging, but from a theoretical perspective would be of interest, especially if one can attain geometric ergodicity. One could also consider the mean-field ULD, which has shown to be promising for deep neural architectures \cite{kazey}, in terms of  the trainability of two-layer neural networks \cite{Chizat,mei}. Related to the above point, another extension could be the perturbed ULD \cite{duncan}, which has demonstrated improvements for sampling even more complex probability measures. \textcolor{black}{A final direction could be providing an extension related to the work of \cite{Leary}. The authors were able to provide an upper bound on one-step meeting probabilities, related to the proposal and the acceptance probability based on particular setup.}

\subsubsection*{Acknowledgements}

All three authors were supported by KAUST baseline funding. We would like to thank the editors and reviewers for their guidance which has greatly improved the article.

\appendix

\section{Proof}

The proof of Proposition \autoref{prop:main_prop} is virtually identical to that of \cite[Theorem 2.1]{disc_model}. There is only place where the proof has to be modified and we
give the result below. Below $C$ is a generic constant which does not depend upon $n,l$ and whose value changes upon each appearance. The expectation operator $\mathbb{E}$
relates to law which generates the simulated process used to compute \eqref{eq:basic_ub_est}.

\begin{lem}
Assume (A\ref{ass:1},\ref{ass:4}). Then there exists a $C\in(0,\infty)$ such that for any $(l,n)\in\mathbb{N}_{l_*}\times\mathbb{Z}^+$ we have:
\begin{eqnarray*}
\mathbb{E}[\mathbb{I}_{\mathsf{B}(C,\Delta_l^{\beta_2},\tilde{\mathsf{d}})^c}(Z_{n,l,l-1})] & \leq & C(n+1)\Delta_l^{\beta_2(2+\epsilon)},\\
\max\left\{\mathbb{E}[\tilde{\mathsf{d}}(U_{n,l},U_{n,l-1})^{2+\epsilon}],\mathbb{E}[\tilde{\mathsf{d}}(\tilde{U}_{n,l},\tilde{U}_{n,l-1})^{2+\epsilon}]\right\} & \leq &
C(n+1)\Delta_l^{\beta_2(1+\tfrac{\epsilon}{2})},
\end{eqnarray*}
where $\beta_2$ and $\epsilon$ are as (A\ref{ass:4}).
\end{lem}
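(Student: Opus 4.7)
The plan is to derive the two estimates in turn: the first by a telescoping argument that converts the per-step escape probabilities in (A\ref{ass:5}).1 into a bound at time $n$, and the second by splitting the expectation on $\{Z_{n,l,l-1}\in\mathsf{B}\}$ versus its complement, treating the first piece trivially from the definition of $\mathsf{B}$ and the second via Cauchy--Schwarz combined with the Lyapunov integrability in (A\ref{ass:5}).2.

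For the first inequality, write $\mathsf{B}=\mathsf{B}(C,\Delta_l^{\beta_2},\tilde{\mathsf{d}})$ for short. Splitting according to whether $Z_{n-1,l,l-1}$ lies in $\mathsf{B}$ gives the pointwise inequality
\begin{equation*}
\mathbb{I}_{\mathsf{B}^c}(Z_{n,l,l-1}) \leq \mathbb{I}_{\mathsf{B}^c}(Z_{n-1,l,l-1}) + \mathbb{I}_{\mathsf{B}^c\times\mathsf{B}}(Z_{n,l,l-1},Z_{n-1,l,l-1}),
\end{equation*}
and iterating down to $n=0$ before taking expectations yields
\begin{equation*}
\mathbb{E}[\mathbb{I}_{\mathsf{B}^c}(Z_{n,l,l-1})] \leq \mathbb{E}[\mathbb{I}_{\mathsf{B}^c}(Z_{0,l,l-1})] + \sum_{k=1}^n \mathbb{E}[\mathbb{I}_{\mathsf{B}^c\times\mathsf{B}}(Z_{k,l,l-1},Z_{k-1,l,l-1})].
\end{equation*}
Both parts of (A\ref{ass:5}).1 bound each of the $n+1$ summands by $C\Delta_l^{\beta_2(2+\epsilon)}$, yielding the first claim.

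For the second inequality, I decompose
\begin{equation*}
\mathbb{E}[\tilde{\mathsf{d}}(U_{n,l},U_{n,l-1})^{2+\epsilon}] = \mathbb{E}[\tilde{\mathsf{d}}^{2+\epsilon}\mathbb{I}_{\mathsf{B}}(Z_{n,l,l-1})] + \mathbb{E}[\tilde{\mathsf{d}}^{2+\epsilon}\mathbb{I}_{\mathsf{B}^c}(Z_{n,l,l-1})].
\end{equation*}
On $\mathsf{B}$ the metric is bounded by $C\Delta_l^{\beta_2}$ by definition, so the first piece is at most $C\Delta_l^{\beta_2(2+\epsilon)}$. For the second piece, Cauchy--Schwarz gives
\begin{equation*}
\mathbb{E}[\tilde{\mathsf{d}}^{2+\epsilon}\mathbb{I}_{\mathsf{B}^c}] \leq \mathbb{E}[\tilde{\mathsf{d}}^{2(2+\epsilon)}]^{1/2}\mathbb{E}[\mathbb{I}_{\mathsf{B}^c}(Z_{n,l,l-1})]^{1/2}.
\end{equation*}
By (A\ref{ass:5}).2, $\tilde{\mathsf{d}}(u,v)^{4(2+\epsilon)}\leq C\mathscr{V}(u)\mathscr{V}(v)$, hence $\tilde{\mathsf{d}}(u,v)^{2(2+\epsilon)}\leq C^{1/2}\mathscr{V}(u)^{1/2}\mathscr{V}(v)^{1/2}$, and a further Cauchy--Schwarz together with the uniform Foster--Lyapunov moment bound $\sup_{n,l}\mathbb{E}[\mathscr{V}(U_{n,l})]<\infty$ (obtained by iterating (A\ref{ass:1}).1) produces $\mathbb{E}[\tilde{\mathsf{d}}^{2(2+\epsilon)}]\leq C$. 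Combined with the first part of the lemma, this gives $\mathbb{E}[\mathbb{I}_{\mathsf{B}^c}]^{1/2}\leq C(n+1)^{1/2}\Delta_l^{\beta_2(2+\epsilon)/2}=C(n+1)^{1/2}\Delta_l^{\beta_2(1+\epsilon/2)}$, and since $(n+1)^{1/2}\leq n+1$ and the $\mathsf{B}$-contribution is of the smaller order $\Delta_l^{\beta_2(2+\epsilon)}$, the second estimate follows. The argument for $(\tilde{U}_{n,l},\tilde{U}_{n,l-1})$ is identical.

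The principal obstacle is the $\mathsf{B}^c$ contribution to the second estimate: in the compact-state-space setting of \cite{disc_model} the metric $\tilde{\mathsf{d}}$ is bounded and this piece is immediate, whereas in the present non-compact framework one must use the integrability of $\tilde{\mathsf{d}}^{4(2+\epsilon)}$ against $\mathscr{V}\otimes\mathscr{V}$ from (A\ref{ass:5}).2 together with the drift control in (A\ref{ass:1}).1. The choice of exponent $4(2+\epsilon)$ in (A\ref{ass:5}).2 (rather than $2(2+\epsilon)$) is precisely what allows the two successive Cauchy--Schwarz applications needed to decouple the metric-moment factor from the small-escape-probability factor.
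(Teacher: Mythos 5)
Your proposal is correct and follows essentially the same route as the paper: the second inequality is handled by exactly the same split on $\mathsf{B}$ versus $\mathsf{B}^c$, the same two Cauchy--Schwarz applications, the bound $\tilde{\mathsf{d}}^{2(2+\epsilon)}\leq C(\mathscr{V}\otimes\mathscr{V})^{1/2}$ from (A\ref{ass:5}) 2., and the moment control from the drift condition (A\ref{ass:1}) 1. The only difference is that for the first inequality you spell out the telescoping argument that the paper simply imports from \cite[Lemma A.3]{disc_model}, which is a harmless (and correct) addition.
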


\begin{proof}
The first inequality is proved in an identical manner to \cite[Lemma A.3.]{disc_model}, so we consider only the second inequality. This latter inequality is shown for 
$\mathbb{E}[\tilde{\mathsf{d}}(U_{n,l},U_{n,l-1})^{2+\epsilon}]$, only, as the argument for the other term is the same up to changes in notation. Let $n\in\mathbb{N}$, then we have
$$
\mathbb{E}[\tilde{\mathsf{d}}(U_{n,l},U_{n,l-1})^{2+\epsilon}] = \mathbb{E}[\tilde{\mathsf{d}}(U_{n,l},U_{n,l-1})^{2+\epsilon}\mathbb{I}_{\mathsf{B}(C,\Delta_l^{\beta_2},\tilde{\mathsf{d}})^c}(Z_{n,l,l-1})] + \mathbb{E}[\tilde{\mathsf{d}}(U_{n,l},U_{n,l-1})^{2+\epsilon}\mathbb{I}_{\mathsf{B}(C,\Delta_l^{\beta_2},\tilde{\mathsf{d}})}(Z_{n,l,l-1})].
$$
For the second term on the R.H.S.~as $Z_{n,l,l-1}\in \mathsf{B}(C,\Delta_l^{\beta_2},\tilde{\mathsf{d}})$, we have $\tilde{\mathsf{d}}(U_{n,l},U_{n,l-1})^{2+\epsilon}\leq C\Delta_l^{\beta_2(2+\epsilon)}$, so we focus on the first term on the R.H.S..~Applying Cauchy-Schwarz and the first-statement of the Lemma, we have
$$
\mathbb{E}[\tilde{\mathsf{d}}(U_{n,l},U_{n,l-1})^{2+\epsilon}\mathbb{I}_{\mathsf{B}(C,\Delta_l^{\beta_2},\tilde{\mathsf{d}})^c}(Z_{n,l,l-1})] \leq
\mathbb{E}[\tilde{\mathsf{d}}(U_{n,l},U_{n,l-1})^{2(2+\epsilon)}]^{1/2} C(n+1)\Delta_l^{\beta_2(1+\tfrac{\epsilon}{2})}.
$$
Then using $\tilde{\mathsf{d}}(U_{n,l},U_{n,l-1})^{4(2+\epsilon)}\in\mathcal{L}_{\mathscr{V}\otimes\mathscr{V}}$, we get
$$
\mathbb{E}[\tilde{\mathsf{d}}(U_{n,l},U_{n,l-1})^{2+\epsilon}\mathbb{I}_{\mathsf{B}(C,\Delta_l^{\beta_2},\tilde{\mathsf{d}})^c}(Z_{n,l,l-1})] \leq
\mathbb{E}[\{\mathscr{V}(U_{n,l})\mathscr{V}(U_{n,l-1})\}^{\frac{1}{2}}]^{1/2} C(n+1)\Delta_l^{\beta_2(1+\tfrac{\epsilon}{2})}.
$$
Then applying Cauchy-Schwarz and using (A\ref{ass:1}) 1.~we have
$$
\mathbb{E}[\tilde{\mathsf{d}}(U_{n,l},U_{n,l-1})^{2+\epsilon}\mathbb{I}_{\mathsf{B}(C,\Delta_l^{\beta_2},\tilde{\mathsf{d}})^c}(Z_{n,l,l-1})] \leq
C(n+1)\Delta_l^{\beta_2(1+\tfrac{\epsilon}{2})},
$$
from which the proof can be completed.
\end{proof}

\section{Algorithms}

In this Appendix we provide each algorithm required for our unbiased estimator. This is related to the simulation from the associated kernels
$\check{Q}_l$, $\check{Q}_{l,l-1}$, $\check{P}_l$, $\check{P}_{l,l-1}$, which are discussed in \Crefrange{alg:q_l_sim}{alg:p_l_l-1_sim}.
For the earlier algorithms, fuller details can be found in \cite{thor}. Recall that our objective is to construct
the coupling, or coupled kernels, $\check{K}_l$ and $\check{K}_{l,l-1}$, which can be decomposed with the abovely stated kernels where
\begin{eqnarray*}
\check{K}_l &=& \alpha \check{Q}_l + (1-\alpha)\check{P}_l, \\
\check{K}_{l,l-1}&=&\mathbb{I}_{D^2}(u_l,\tilde{u}_l,u_{l-1},\tilde{u}_{l-1})\check{Q}_{l,l-1} + \mathbb{I}_{(D^2)^c}(u_l,\tilde{u}_l,u_{l-1},\tilde{u}_{l-1})[\alpha\check{Q}_{l,l-1} + (1-\alpha)\check{P}_{l,l-1}],
\end{eqnarray*}
where $\alpha \in (0,1)$. Finally we require the coupling $\overline{K}_{l,l-1}$, which is required for the initialization before sampling from $\check{K}_{l,l-1}$, and is presented in \autoref{alg:k_l_l-1_sim}.
\textcolor{black}{
\begin{center}
\captionsetup[algorithm]{style=algori}
\captionof{algorithm}{\textcolor{black}{Sampling from kernel $\check{Q}_l$.}}
\label{alg:q_l_sim}
\begin{enumerate}
\item{{\bf Input}: $l$, $(u,\tilde{u})=\left((x_0,v_0),(\tilde{x}_0,\tilde{v}_0)\right)$.}
\item{Sample $(\Gamma_{0,l},\dots,\Gamma_{\Delta_l^{-1}-1,l})$ and $(B_{\Delta_l}, B_{2\Delta_l}-B_{\Delta_l},\dots, B_1-B_{1-\Delta_l})$.}
\item{Run the recursion \eqref{eq:disc_dyn2}-\eqref{eq:disc_dyn3} with $\{\Gamma_{k,l}\}_{k=0}^{\Delta_l^{-1}-1}$ and $\{B_{(k+1)\Delta_l}-B_{k\Delta_l}\}_{k=0}^{\Delta_l^{-1}-1}$ \\ up-to time 1. 
}
\item{{\bf Output}: $u'=(x_1,v_1)$ and $\tilde{u}'=(\tilde{x}_1,\tilde{v}_1)$ as generated in step 3.}
\end{enumerate}
\vspace{-0.21cm}
\hrulefill
\end{center}}

\textcolor{black}{
\begin{center}
\captionsetup[algorithm]{style=algori}
\captionof{algorithm}{Sampling from kernel $\check{P}_l$.}
\label{alg:p_l_sim}
\begin{enumerate}
\item{{\bf Input}: $l$, $(u,\tilde{u})=\left((x_0,v_0),(\tilde{x}_0,\tilde{v}_0)\right)$.}
\item{Sample  $(\Gamma_{0,l},\dots,\Gamma_{\Delta_l^{-1}-2,l})$ and $(B_{\Delta_l}, B_{2\Delta_l}-B_{\Delta_l},\dots, B_{1-\Delta_l}-B_{1-2\Delta_l})$.}
\item{Run the recursion \eqref{eq:disc_dyn2}-\eqref{eq:disc_dyn3} with $\{\Gamma_{k,l}\}_{k=0}^{\Delta_l^{-1}-2}$ and $\{B_{(k+1)\Delta_l}-B_{k\Delta_l}\}_{k=0}^{\Delta_l^{-1}-2}$ \\ up-to time $1-\Delta_l$. 
}
\item{Sample $\left((X_1,V_1),(\tilde{X}_1,\tilde{V}_1)\right)|\left((x_{1-\Delta_l},v_{1-\Delta_l}),(\tilde{x}_{1-\Delta_l},\tilde{v}_{1-\Delta_l})\right)$ from a maximal coupling of $\displaystyle p_l(x_1,v_1|x_{1-\Delta_l},v_{1-\Delta_l}) \quad\textrm{and}\quad p_l(\tilde{x}_1,\tilde{v}_1|\tilde{x}_{1-\Delta_l},\tilde{v}_{1-\Delta_l}),
$
where $p_l\sim N_{2}(m,C)$ where $m, C$ are determined from  \eqref{eq:disc_dyn2}-\eqref{eq:disc_dyn3}. 
}
\item{{\bf Output}: $u'=(x_1,v_1)$ and $\tilde{u}'=(\tilde{x}_1,\tilde{v}_1)$ as generated in step 4..}
\end{enumerate}
\vspace{-0.2cm}
\hrulefill
\end{center}}

\textcolor{black}{
\begin{center}
\captionsetup[algorithm]{style=algori}
\captionof{algorithm}{Sampling from coupled kernel $\overline{K}_{l,l-1}$.}
\label{alg:k_l_l-1_sim}
\begin{enumerate}
\item{{\bf Input}: $l\in\{l_*+1,l_*+2,\dots\}$, and  $(u_l,u_{l-1})=\left((x_{0,l},v_{0,l}),(x_{0,l-1},v_{0,l-1})\right)$.}
\item{Sample  $(\Gamma_{0,l},\dots,\Gamma_{\Delta_l^{-1}-1,l})$ and  $(B_{\Delta_l}, B_{2\Delta_l}-B_{\Delta_l},\dots, B_1-B_{1-\Delta_l})$.
\\ Concatenate to obtain $(\Gamma_{0,l-1},\dots,\Gamma_{\Delta_{l-1}^{-1}-1,l-1})$ \\
and $(B_{\Delta_{l-1}}, B_{2\Delta_{l-1}}-B_{\Delta_{l-1}l},\dots, B_1-B_{1-\Delta_{l-1}})$.}
\item{For $s\in\{l,l-1\}$: run the recursion \eqref{eq:disc_dyn2}-\eqref{eq:disc_dyn3} with $\{\Gamma_{k,s}\}_{k=0}^{\Delta_s^{-1}-1}$ \\ and $\{B_{(k+1)\Delta_s}-B_{k\Delta_s}\}_{k=0}^{\Delta_s^{-1}-1}$  up-to time 1. 
}
\item{{\bf Output}: $(u_l,u_{l-1})=\left((x_{1,l},v_{1,l}),(x_{1,l-1},v_{1,l-1})\right)$ as generated in step 3..}
\end{enumerate}
\vspace{-0.2cm}
\hrulefill
\end{center}}

\textcolor{black}{
\begin{center}
\captionsetup[algorithm]{style=algori}
\captionof{algorithm}{Sampling from coupled kernel $\check{Q}_{l,l-1}$.}
\label{alg:q_l_l-1_sim}
\begin{enumerate}
\item{{\bf Input}: $l\in\{l_*+1,l_*+2,\dots\}$, and for $s\in\{l,l-1\}$,  \\$(u_s,\tilde{u}_{s})=\left((x_{0,s},v_{0,s}),(\tilde{x}_{0,s},\tilde{v}_{0,s})\right)$.}
\item{Sample $(\Gamma_{0,l},\dots,\Gamma_{\Delta_l^{-1}-1,l})$ and $(B_{\Delta_l}, B_{2\Delta_l}-B_{\Delta_l},\dots, B_1-B_{1-\Delta_l})$.
\\ Concatenate to obtain $(\Gamma_{0,l-1},\dots,\Gamma_{\Delta_{l-1}^{-1}-1,l-1})$ \\ and $(B_{\Delta_{l-1}}, B_{2\Delta_{l-1}}-B_{\Delta_{l-1}l},\dots, B_1-B_{1-\Delta_{l-1}})$.}
\item{ For $s\in\{l,l-1\}$: run the recursion \eqref{eq:disc_dyn2}-\eqref{eq:disc_dyn3} with $\{\Gamma_{k,s}\}_{k\in 0}^{\Delta_s^{-1}-1}$ \\ and $\{B_{(k+1)\Delta_s}-B_{k\Delta_s}\}_{k=0}^{\Delta_s^{-1}-1}$ up-to time 1.
}
\item{{\bf Output}: for $s\in\{l,l-1\}$, $(u_s,\tilde{u}_{s})=\left((x_{1,s},v_{1,s}),(\tilde{x}_{1,s},\tilde{v}_{1,s})\right)$ as generated in step 3..}
\end{enumerate}
\vspace{-0.2cm}
\hrulefill
\end{center}}
\textcolor{black}{
\begin{center}
\captionsetup[algorithm]{style=algori}
\captionof{algorithm}{Sampling from coupled kernel $\check{P}_{l,l-1}$.}
\label{alg:p_l_l-1_sim}
\raggedright
\begin{enumerate}
\item{{\bf Input}: $l\in\{l_*+1,l_*+2,\dots\}$, and for $s\in\{l,l-1\}$, \\ $(u_s,\tilde{u}_{s})=\left((x_{0,s},v_{0,s}),(\tilde{x}_{0,s},\tilde{v}_{0,s})\right)$.}
\item{Sample $(\Gamma_{0,l},\dots,\Gamma_{\Delta_l^{-1}-2,l})$ and $(B_{\Delta_l}, B_{2\Delta_l}-B_{\Delta_l},\dots, B_{1-\Delta_l}-B_{1-2\Delta_l})$.
\\ Concatenate to obtain $(B_{\Delta_{l-1}}, B_{2\Delta_{l-1}}-B_{\Delta_{l-1}l},\dots, B_{1-\Delta_{l-1}}-B_{1-2\Delta_{l-1}})$  \\ and 
$(\Gamma_{0,l-1},\dots,\Gamma_{\Delta_{l-1}^{-1}-2,l-1})$.}
\item{For $s\in\{l,l-1\}$: run the recursion \eqref{eq:disc_dyn2}-\eqref{eq:disc_dyn3} with $\{\Gamma_{k,s}\}_{k=0}^{\Delta_s^{-1}-1}$ \\ and $\{B_{(k+1)\Delta_s}-B_{k\Delta_s}\}_{k=0}^{\Delta_s^{-1}-1}$  up-to time $1-\Delta_s$. 
}
\item{Sample 
{\footnotesize
\begin{eqnarray*}
\left((X_{1,l},V_{1,l}),(\tilde{X}_{1,l},\tilde{V}_{1,l})\right),
\left((X_{1,l-1},V_{1,l-1}),(\tilde{X}_{1,l-1},\tilde{V}_{1,l-1})\right)\Big|\left((x_{1-\Delta_l,l},v_{1-\Delta_l,l}),(\tilde{x}_{1-\Delta_l,l},\tilde{v}_{1-\Delta_l,l})\right), \\  
((x_{1-\Delta_{l-1},l-1},v_{1-\Delta_{l-1},l-1}),\\  (\tilde{x}_{1-\Delta_{l-1},l-1},\tilde{v}_{1-\Delta_{l-1},l-1}))
\end{eqnarray*}
}
from the synchronous pairwise reflection maximal coupling \cite{Bou}. 
}
\item{{\bf Output}: for $s\in\{l,l-1\}$, $(u_s,\tilde{u}_{s})=\left((x_{1,s},v_{1,s}),(\tilde{x}_{1,s},\tilde{v}_{1,s})\right)$ as generated in step 4..}
\end{enumerate}
\vspace{-0.2cm}
\hrulefill
\end{center}}

\end{document}